%=============================================================================
%
% Design and performance analysis of channel estimators under
% pilot spoofing attacks in multiple-antenna systems
%
% Paper T-IFS-10847-2019
%
% Electronic version (double-column single-spaced):  March 2020
%
% Latex2e
%
% Second version:
% 1) written by Francesco Verde (dec. 2019)
% 2) revised by Giacinto Gelli (dec. 2019)
% 3) final versione by Francesco Verde  (mar. 2020)
%
% Comments to Francesco Verde f.verde@unina.it
%
%=============================================================================

% Based on IEEEtran.cls 2007/03/05 version V1.7a (twocolumn)
\documentclass[10pt,twocolumn,twoside]{IEEEtran}

% Packages
\usepackage[active]{srcltx} %SRC Specials for DVI Searching
\usepackage[cmex10]{amsmath}
\interdisplaylinepenalty=2500
\usepackage{amsfonts}
\usepackage[final]{graphics}
\usepackage[final]{graphicx}
\usepackage{amsbsy}
\usepackage{amsbsy}
\usepackage{amssymb}
\usepackage{url}
\usepackage{enumerate}
\usepackage{cite}
\usepackage{psfrag}
\usepackage{color}
\usepackage{euscript}
\usepackage[utf8]{inputenc}
\usepackage{soul}
% ArXiv
\usepackage{tikz}
\usepackage{textcomp}
\usepackage{lipsum}

% MATH Commands
\newcommand{\bm}[1]{{\boldsymbol{#1}}}
\newcommand{\range}{\mathcal{R}}
\newcommand{\Orange}{\mathcal{R}^{\bot}}
\newcommand{\nullo}{\mathcal{N}}

\newcommand{\diag}{{\text{diag}}}

\newcommand{\trace}{{\text{trace}}}
\renewcommand{\vec}{{\text{vec}}}

\newcommand{\rank}{{\text{rank}}}

\newcommand{\Es}{{\mathbb{E}}}

\newcommand{\Id}{\mathbf{I}}
\newcommand{\Zero}{\mathbf{O}}

\renewcommand{\d}{\mathbf{d}}

\newcommand{\A}{\mathbf{A}}
\newcommand{\Fcal}{\bm{ \mathcal B}}

\newcommand{\C}{\mathbb{C}}

\newcommand{\N}{\mathbb{N}}

\newcommand{\eqdef}{\triangleq}
\newcommand{\ba}{\mathbf{a}}
\newcommand{\bA}{\mathbf{A}}

\newcommand{\bB}{\mathbf{B}}

\newcommand{\bd}{\mathbf{d}}

\newcommand{\bh}{\mathbf{h}}

\newcommand{\bhb}{\mathbf{h}_{\text{B}}}
\newcommand{\bhbm}{\mathbf{h}_{\text{B},m}}

\newcommand{\bhbhatmmseuno}{\widehat{\mathbf{h}}_{\text{B},\text{MMSE}}}
\newcommand{\bhbhatmmsesub}{\widehat{\mathbf{h}}_{\text{B},\text{MMSE-SUB}}}
\newcommand{\bhbhatmmsesmi}{\widehat{\mathbf{h}}_{\text{B},\text{MMSE-SMI}}}
\newcommand{\bhbhatlmmseuno}{\widehat{\mathbf{h}}^{(1)}_{\text{B},\text{LMMSE}}}
\newcommand{\bhbhatlmmsedue}{\widehat{\mathbf{h}}^{(2)}_{\text{B},\text{LMMSE}}}

\newcommand{\bhbhatmmse}{\widehat{\mathbf{h}}_{\text{B},\text{MMSE}}}

\newcommand{\bH}{\mathbf{H}}
\newcommand{\bK}{\mathbf{K}}
\newcommand{\bI}{\mathbf{I}}

\newcommand{\bJ}{\mathbf{J}}
\newcommand{\bp}{\mathbf{p}}
\newcommand{\bP}{\mathbf{P}}

\newcommand{\bR}{\mathbf{R}}
\newcommand{\bO}{\mathbf{O}}

\newcommand{\bS}{\mathbf{S}}

\newcommand{\bU}{\mathbf{U}}
\newcommand{\bv}{\mathbf{v}}
\newcommand{\bV}{\mathbf{V}}
\newcommand{\bw}{\mathbf{w}}
\newcommand{\bx}{\mathbf{x}}

\newcommand{\bW}{\mathbf{W}}
\newcommand{\by}{\mathbf{y}}

\newcommand{\bY}{\mathbf{Y}}

\newcommand{\balpha}{\mbox{\boldmath $\alpha$}}

\newcommand{\bLambda}{\mbox{\boldmath $\Lambda$}}

\newcommand{\Nr}{{N}}

\newcommand{\Pa}{\EuScript{P}_{\text{B}}}
\newcommand{\Ps}{\EuScript{P}_{\text{E}}}
\newcommand{\Pamoverline}{\EuScript{P}_{\text{B},\overline{m}}}
\newcommand{\Pam}{\EuScript{P}_{\text{B},m}}
\newcommand{\Psm}{\EuScript{P}_{\text{E},m}}
\newcommand{\Psmoverline}{\EuScript{P}_{\text{E},\overline{m}}}
\newcommand{\Pshat}{\widehat{\EuScript{P}}_{\text{E}}}
\newcommand{\Psmax}{{\EuScript{P}}_{\text{E},\text{max}}}
\newcommand{\herm}{\text{H}}
\newcommand{\trasp}{\text{T}}
\def\Cap{\mathsf{C}}

% Equations
\def\bdm#1\edm{\begin{displaymath}#1\end{displaymath}}
\def\be#1\ee{\begin{equation}#1\end{equation}}
\def\barr#1\earr{\begin{align}#1\end{align}}

% Shorthands for IEEE Transactions
\newcommand{\IeeeTIT}{{\em IEEE Trans.\ Inf. Theory\/}}
\newcommand{\IeeeTSP}{{\em IEEE Trans.\ Signal Process.\/}}
\newcommand{\IeeeTCOMM}{{\em IEEE Trans.\ Commun.\/}}

\newcommand{\IeeeWCOMMLETT}{{\em IEEE Wireless Commun.\ Lett.\/}}
\newcommand{\IeeeTWC}{{\em IEEE Trans.\ Wireless Commun.\/}}
\newcommand{\IeeeJSAC}{{\em IEEE J.\ Select.\ Areas Commun.\/}}
\newcommand{\IeeeTVT}{{\em IEEE Trans.\ Veh. Technol.\/}}

\newcommand{\IeeeTIFS}{{\em IEEE Trans. Inf. Foren. Sec.\/}}

% Environments, Theorems etc.
\newtheorem{theorem}{Theorem}[section]

\newtheorem{lemma}[theorem]{Lemma}

% ArXiv
\newcommand\acceptedtext{%
  \footnotesize This article has been accepted for publication in a future issue of this journal, but has not been fully edited. Content may change prior to final publication. \\
  Citation information: DOI 10.1109/TIFS.2020.2985548, IEEE Transactions on Information Forensics and Security.}
\newcommand\acceptednotice{%
\begin{tikzpicture}[remember picture,overlay]
\node[anchor=north,yshift=-6pt] at (current page.north) {%
\begin{minipage}{\textwidth}
\center \acceptedtext
\end{minipage}};
\end{tikzpicture}%
}

\begin{document}

\title{Design and performance analysis of channel estimators under pilot spoofing attacks
\\ in multiple-antenna systems}

\author{Donatella~Darsena,~\IEEEmembership{Senior Member,~IEEE},
             Giacinto~Gelli,~\IEEEmembership{Senior Member,~IEEE},
        Ivan~Iudice, \\ and Francesco~Verde,~\IEEEmembership{Senior Member,~IEEE}
\thanks{
Manuscript received December 26, 2019; revised March 28, 2020; 
accepted March 30, 2020. Date of publication XX YY, 2020; date of current version March 31, 2020. The associate editor coordinating the review of this manuscript and approving it for publication was 
Prof.\ Tobias Oechtering. \textit{(Corresponding author: Francesco Verde)}.}
\thanks{
D.~Darsena is with the Department of Engineering,
Parthenope University, Naples I-80143, Italy (e-mail: darsena@uniparthenope.it).
G.~Gelli and F.~Verde are with the Department of Electrical Engineering and
Information Technology, University Federico II, Naples I-80125,
Italy [e-mail: (gelli,f.verde)@unina.it].
I.~Iudice is with Italian Aerospace Research Centre (CIRA),
Capua I-81043, Italy (e-mail: i.iudice@cira.it).}
\thanks{D.~Darsena, G.~Gelli, and F.~Verde are also with 
National Inter-University Consortium for Telecommunications (CNIT).}}

\markboth{IEEE TRANSACTIONS ON INFORMATION FORENSICS
AND SECURITY,~Vol.~xx, No.~yy,~Month~2020}{Design and performance analysis of channel estimators under a pilot spoofing attack
in multiple-antenna systems}

% IEEE copyright notice added
\IEEEpubid{\begin{minipage}{\textwidth}\ \\
\center 1556--6013~\copyright~2020 IEEE. \\
Personal use is permitted, but republication/redistribution requires IEEE permission. \\
See http://www.ieee.org/publications\_standards/publications/rights/index.html for more information.
\end{minipage}}

\maketitle
\acceptednotice

\begin{abstract}

In multiple antenna systems employing time-division duplexing,
spatial precoder design at the base station (BS) leverages
channel state information acquired through uplink pilot transmission,
under the assumption of channel reciprocity.
Malicious eavesdroppers can start pilot spoofing
attacks to alter such design,
in order to improve their eavesdropping
performance in downlink.
The aim of this paper is to study
the effects of pilot spoofing attacks on
uplink channel estimation, by assuming that the
BS knows the angle
of arrivals (AoAs) of the
legitimate channels.
Specifically, after assessing the performance of
the simple least squares estimator (LSE),
we consider more sophisticated estimators, such as
the maximum likelihood estimator (MLE)
and different versions of the minimum mean
square error estimator (MMSEE), involving different degrees of
\textit{a priori} information about the
pilot spoofing attacks.
Theoretical analysis and numerical simulations are used to
compare the performance of such estimators.
In particular, we analy\-tically demonstrate that
the spoofing effects in the high signal-to-noise ratio regime
can be completely suppressed, under certain
conditions involving the AoAs of the legitimate
and spoofing channels.
Moreover,  we show that even an imperfect knowledge of the
AoAs and of the average transmission power of the spoofing signals
allows the MLE and MMSEE to achieve
significant performance gains over the LSE.

\end{abstract}

\begin{IEEEkeywords}
Array processing, channel estimation, least squares,
maximum  likelihood, minimum mean square error,
multiple antenna systems,
physical-layer security, pilot spoofing.
\end{IEEEkeywords}

\section{Introduction}

\IEEEPARstart{M}{ultiple} transmit/receive antennas is a
well-established technology
to design high-speed reliable
wireless communication links \cite{Telatar}.
In terms  of  spectral  efficiency,
a  multiple-antenna system can
approach  the Shannon capacity of the
wireless channel, provided that
channel state information (CSI) is available at both
ends of the communication link \cite{Foschini-1,Foschini-2}.
In particular, design of effective transmit beamformers
for the downlink channel requires
knowledge of the CSI at the base station (BS).
A widely-adopted approach for acquiring CSI at the BS
in time-division duplexing (TDD) systems
relies on channel reciprocity.
Standard TDD beamforming protocols involve two phases:
in the {\em training phase}, the users
transmit known (\textit{pilot} or \textit{training}) symbols
to the BS in the uplink;
in the {\em data phase},
the BS estimates the uplink channels
and, capitalizing on channel reciprocity, forms the
downlink data beams to the users,
by using the estimated uplink channels.
TDD beamforming is expected to play a significant
role in forthcoming 5G systems \cite{Darsena.Wiley}, especially when working at
millimeter-wave frequencies and with
a huge number of antenna elements
at the BS (\textit{massive MIMO} systems).

\begin{figure*}
\centering
\includegraphics[width=1.25\columnwidth]{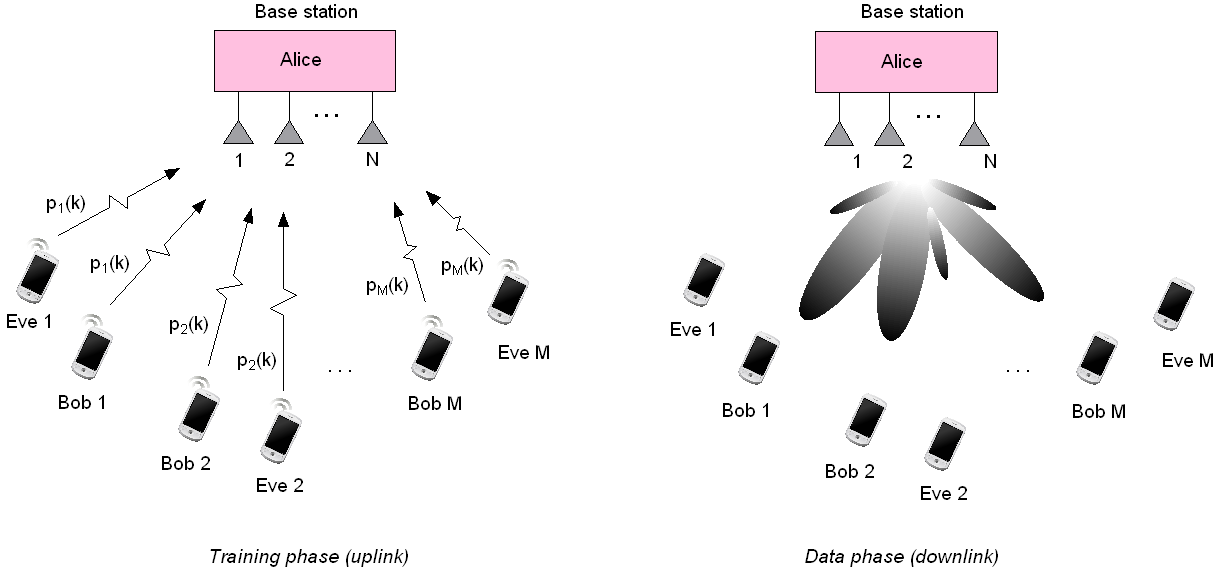}
\caption{Single-antenna eavesdroppers (Eves) launch
pilot spoofing attacks during the uplink training phase
between legitimate single-antenna users (Bobs) and
a multi-antenna BS (Alice), with the aim of
modifying beam design in order to intercept information
in the subsequent downlink data phase.}
\label{fig:fig_1}
\end{figure*}

\IEEEpubidadjcol

Uplink training sessions in TDD systems are vulnerable to malicious
attacks aimed at altering their operation \cite{Miller}.
Specifi\-cally, \emph{intelligent} eavesdroppers
can significantly enhance their wiretapping capability
in downlink by mounting
\textit{pilot spoofing attacks} in uplink \cite{Zhou}.
With reference to Fig.~\ref{fig:fig_1}, we consider a scenario
encompassing a multi-antenna BS, referred to as {\em Alice},
multiple legitimate single-antenna users, referred to as {\em Bobs}, and
multiple  single-antenna eavesdroppers, referred to as {\em Eves}.
The eavesdroppers in Fig.~\ref{fig:fig_1}
are intelligent in the sense that they can
act both as passive nodes (i.e., they try to intercept confidential communications
in downlink) as well as active ones (i.e., they can
transmit pilot signals in uplink).
Specifically,
the aim of each Eve is to steal information from a particular Alice-to-Bob
downlink transmission during the data phase, thus acting
as a passive terminal. With this goal in mind,
each Eve mounts a spoofing attack
during the training phase,
by transmitting the same pilot sequence as the selected Bob,
thus operating as an active node.
In this paper, we focus on such pilot attacks during the
uplink training phase and study the
performance of the channel estimation
process at Alice, by assuming that Bobs
employ orthogonal pilot sequences.

Several papers have dealt with the problem of combating pilot spoofing attacks
\cite{Kap,Im,Xiong.2015,Bas,Tugnait.2015,Xiong.2016,Wu,Tugnait.2017,Tugnait.2018,Huang,Wang.2019}. In all such papers, a common belief is that, if Eves attack the uplink
training phase by transmitting the same training sequence as Bobs, the channel
estimation process at Alice is irreparably compromised. Henceforth, all the efforts
in \cite{Kap,Im,Xiong.2015,Bas,Tugnait.2015,Xiong.2016,Wu,Tugnait.2017,Tugnait.2018,Huang,Wang.2019} have been mainly directed towards detection of the Eves' attacks
and relative countermeasures.
In particular, an energy-ratio-based detector has been proposed in
\cite{Xiong.2015} to detect a spoofing attack, which
explores the asymmetry of the received signal power levels
between Bob and Alice.
An attack can be additionally detected by resorting to
the source enumeration approach \cite{Tugnait.2015} and
the minimum description length criterion \cite{Tugnait.2017}.
To mitigate the effects of spoofing, estimation of both
Bob and Eve channels has been studied in \cite{Tugnait.2018},
along with secure beamforming.
A random channel training scheme has been proposed in \cite{Wang.2018},
where multiple orthogonal pilot sequences
are simultaneously allocated to Bob,
who randomly selects
one pilot sequence to transmit.
A remarkable extension of \cite{Tugnait.2018} and \cite{Wang.2018} to a multiuser
scenario has been developed in \cite{Wang.2019}.

Although all the aforementioned works have collectively provided a
significant research progress on counteracting pilot spoofing attacks,
they did not study in detail the effects of spoofing
on channel estimation performance.
Moreover, in such papers, the adopted fading channel models do not
incorporate detailed spatial information,
such as the angle-of-arrivals (AoAs), which
will be shown in the following to play a central role when
assessing the legitimate channel estimation
performance under a spoofing attack.

In this paper, we resort to angle-dependent multipath fading channel
models for the Bobs-to-Alice and Eves-to-Alice  links,
by assuming that the
AoAs of the legitimate
Bobs-to-Alice channels are known at Alice.
Moreover, we consider
\textit{least squares (LS)}, \textit{maximum likelihood (ML)}, and
\textit{minimum mean square error (MMSE)} channel estimators \cite{Kay}.

LS and ML estimators belong to the family of
the classical estimation approaches, for which
the legitimate channels are viewed as
\textit{deterministic but unknown} vectors.
In the case of the LS estimator (LSE),
which is the easiest type of estimator to
analyze and understand, the existence of the spoofing
signals is ignored and, thus, no information
about the pilot spoofing attacks is required at Alice.
On the other hand, when the spoofing signals are treated as
purely noise for the derivation of the ML estimator (MLE),
knowledge of the correlation matrix of each spoofing-plus-noise
contribution is also required.

Following the \textit{Bayesian approach}, three MMSE estimators (MMSEEs) are
further studied in this paper by
regarding the legitimate channels  as  \textit{random}
vectors whose particular realizations have to be estimated.
For each Bob-to-Alice channel, the (optimal) MMSE estimator (MMSEE)  minimizes
the \textit{Bayesian mean squared error (BMSE)}, where
the average is taken not only over the
data -- which also includes the contribution of the
corresponding Eve -- but over the probability density function (pdf) of
the legitimate channel vector as well, without imposing any
constraint on the structure of the estimator.
The MMSEE
has prior information on the legitimate channels
and exploits this information to outperform the MLE,
especially for low-to-moderate signal-to-noise ratio (SNR) values.

The remaining two MMSEEs are developed
when Alice is supposed to be aware of
the pilot spoofing attacks: namely, Alice has
\textit{a priori} estimates  of the AoAs and average transmission power
of each spoofing signal, and she
knows the statistical
characterization of the corresponding estimation errors.
In this case, for each Bob-to-Alice channel,
the estimator minimizing the Bayesian MSE --
where the average is now taken over the joint pdf of the
legitimate channel, the data, and the estimation errors
of the corresponding Eve's parameters -- does not admit a closed-form expression.
To obtain mathematically tractable solutions, we
retain the Bayesian MMSE criterion but constrain the estimators to be linear.

With reference to a given Bob-Eve pair,
the contribution of this paper is threefold:

\begin{enumerate}[1)]

\item
We show analytically that
the LSE does not have spoofing suppression capabilities,
even in the absence of noise, whereas the MLE can perfectly
reject the spoofing pilots of Eve in the high-SNR region,
provided that the subspaces generated by the columns of the
steering matrices
of the  legitimate and spoofing channels are nonoverlapping.

\item
We analytically demonstrate that,
if the AoAs through the multipath
Bob-to-Alice and Eve-to-Alice
channels are all distinct, the MMSEE is able to cancel the spoofing signal
in the high-SNR regime.
The synthesis of the MLE and MMSEE involves the knowledge of the
correlation matrix of the received data. We discuss how in principle such a matrix
can be estimated and, hence, MLE and MMSEE can be implemented without
requiring awareness of the pilot spoofing attack.

\item
We develop two (suboptimal) linear MMSEEs (LMMSEEs)
estimators: the former one, which is referred to as \textit{naive} LMMSEE,
is synthesized by relying on estimates of the AoAs and average transmission
power of Eve, without making  use of the statistical characterization of the
corresponding estimation errors; the latter one,
which is referred to as \textit{improved} LMMSE,
besides using the
estimates of the Eve's parameters, also exploits the knowledge of the pdf
of the estimation errors.
We prove that the naive LMMSEE exhibits a serious performance
degradation in the high SNR regime, whereas its improved version is fairly
resistant to the pilot spoofing attack.

\end{enumerate}

The paper is organized as follows. The next section
describes the signal model and introduces some basic assumptions.
Section~\ref{sec:LSE&MLE} revisits LSE and MLE,
and reports a theoretical analysis of their spoofing suppression capabilities.
The performance of the MMSEE and its implementation from data
are studied in Section~\ref{sec:MMSEE}; naive and
improved  LMMSEEs are
developed in the same section starting from estimates of the Eve's parameters.
Section~\ref{sec:simulation} discusses analytical and simulation
results in terms of BMSE and secrecy rate, and
Section~\ref{sec:concl} offers some conclusions.

\subsection{Notations}
Upper- and lower-case bold letters denote matrices and vectors;
the superscripts
$*$, $\trasp$, $\herm$, $-1$, and $\dag$
denote the conjugate,
the transpose, the Hermitian (conjugate transpose),
the inverse,
and the Moore-Penrose generalized
inverse \cite{Ben.book.2002} of a matrix;
$\mathbb{C}$, $\mathbb{R}$ and $\mathbb{Z}$ are
the fields of complex, real and integer numbers;
$\mathbb{C}^{n}$ $[\mathbb{R}^{n}]$ denotes the
vector-space of all $n$-column vectors with complex
[real] coordinates;
similarly, $\mathbb{C}^{n \times m}$ $[\mathbb{R}^{n \times m}]$
denotes the vector-space of all the $n \times m$ matrices with
complex [real] elements;
$\text{erf}(\cdot)$ denotes the error function \cite{Proakis};
$\delta(n)$ is the Kronecker delta, i.e., $\delta(n)=1$ when $n=0$
and zero otherwise;
$\log_a(\cdot)$ is taken to the base $a$ [we also use the shorthand
$\ln(\cdot)$ for $\log_e(\cdot)$]
and $j \eqdef \sqrt{-1}$ denotes the imaginary unit;
$\mathbf{0}_{n}$, $\Zero_{n \times m}$ and $\Id_{n}$
denote the $n$-column zero vector, the $n \times m$
zero matrix and the $n \times n$ identity matrix;
$\otimes$ denotes the Kronecker product between two matrices \cite{Horn.book.1991};
for any $\mathbf{A} \in \mathbb{C}^{n \times m}$,
$\mathrm{rank}(\mathbf{A})$ and $\mathrm{trace}(\mathbf{A})$
denote the rank and the trace of $\mathbf{A}$;
for any $\mathbf{A} \in \mathbb{C}^{n \times m}$,
$\|\bA\| \eqdef [\trace(\bA\,\bA^\herm)]^{1/2}$ denotes the (induced)
Frobenius norm of $\bA$ \cite{Horn.book.1990};
matrix $\mathbf{A}= \diag (a_{0}, a_{1}, \ldots,
a_{p-1}) \in \mathbb{C}^{p \times p}$
is diagonal;
$\frac{\partial}{\partial \bA^*} f \in \mathbb{C}^{n \times m}$ is the
{\em gradient} of the real-valued
scalar function $f$ with respect to $\bA^* \in \mathbb{C}^{n \times m}$ \cite{Hj};
the operator $\vec(\bA)\eqdef [\ba_1^\trasp,
\ba_2^\trasp, \ldots, \ba_m^\trasp]^\trasp \in \mathbb{C}^{n m}$ creates a column vector from
the matrix $\mathbf{A}=[\ba_1, \ba_2, \ldots, \ba_m] \in \mathbb{C}^{n \times m}$
by stacking the column vectors of $\bA$;
$\nullo(\A)$,
$\range(\A)$, and
$\Orange(\A)$
denote the null space, the
range (column space), and the
orthogonal complement of the column
space  of $\A \in \C^{n \times m}$ in
$\C^n$;
$\Es[\cdot]$ denotes ensemble averaging
and, finally, a circularly symmetric complex Gaussian random vector
$\bx \in \C^n$ with mean $\bm{\mu} \in \C^n$ and covariance
matrix $\bR \in \C^{n \times n}$ is denoted as
$\bx \sim {\cal CN}(\bm{\mu},\bR)$.

\section{System model and preliminaries}
\label{sec:model}

As shown in Fig.~\ref{fig:fig_1},
our scenario encompasses one BS (Alice) equipped with $\Nr$
receive antennas, $M$ legitimate single-antenna users (Bobs) and $M$ single-antenna
eavesdroppers (Eves).
Bobs use orthogonal pilot sequences for channel estimation:
specifi\-cally, for $\mathcal{K} \eqdef \{0,1,\ldots,K-1\}$
and $\mathcal{M} \eqdef \{1,2,\ldots,M\}$,
let $p_m(k) \in \C$ denote the pilot symbol transmitted
by the $m$th Bob within
the $k$th symbol interval, with
average transmission power
(per symbol) $\Pam$ and $K \ge 1$ \cite{Biguesh.2006}.%
\footnote{Throughout the paper,
we use the convention that the subscripts B and E
indicate a quantity referring to Bobs and Eves, respectively.}
Orthogonality among the pilot
sequences means that
$\bp_{m_1}^\herm \, \bp_{m_2}=\delta(m_1-m_2)$
where $\bp_m \eqdef [p_m(0), p_m(1), \ldots, p_m(K-1)]^\trasp \in \C^K$ is the
pilot vector of the $m$th Bob.
The vector $\bp_m$ is known to the $m$th Eve, which concurrently sends the same pilot
block in the training phase with
average transmission power $\Psm$.
Moreover, $\bp_1, \bp_2, \ldots, \bp_M$ are perfectly known at Alice.

We assume that no appreciable local scattering occurs at Alice
and, hence, fading at its antennas
is spatially correlated.
Such an assumption is reasonable \cite{Ertel} when the BS
is sufficiently high above the ground.
In this case, for narrowband signals,
the angle-dependent multipath
fading single-input multiple-output (SIMO)
baseband channel between
the generic transmitter $\text{TX} \in \{\text{B}, \text{E}\}$
and Alice can be modeled as
\be
\widetilde{\bh}_{\text{TX},m} = \frac{1}{\sqrt{L_{\text{TX},m}}}
\sum_{\ell=1}^{L_{\text{TX},m}}
\ba(\theta_{\text{TX},\ell,m}) \,  h_{\text{TX},\ell,m}
\label{eq:channel}
\ee
where $L_{\text{TX},m} \in \N$ is the number of 
paths between the $m$th transmitter
and Alice,
$\ba(\theta_{\text{TX},\ell,m}) \in \C^N$ and $h_{\text{TX},\ell,m} \in \C$
denote the steering vector and the gain of the $\ell$th 
path of the $m$th Bob, respectively,
with $\theta_{\text{TX},\ell,m}$ being the corresponding AoA.
For the sake of
simplicity, we assume a uniform linear array (ULA) at Alice.\footnote{Our framework can be extended to nonuniform arrays as well.}
The (normalized) steering vector $\ba(\theta_{\text{TX},\ell,m})$ can be expressed \cite{Fusco} as shown at the top of the next page in \eqref{eq:a},
\begin{figure*}[!t]
\normalsize\be
\ba(\theta_{\text{TX},\ell,m}) = \frac{\left[e^{j 2 \pi \Psi_0(\theta_{\text{TX},\ell,m})},
e^{j 2 \pi \Psi_1(\theta_{\text{TX},\ell,m})}, \ldots,
e^{j 2 \pi \Psi_{N-1}(\theta_{\text{TX},\ell,m})} \right]^\trasp}{\sqrt{N}}
\label{eq:a}
\ee
\hrulefill
\end{figure*}
with
\be
\Psi_n(\theta_{\text{TX},\ell,m}) \eqdef  - n \, \frac{d}{\lambda_\text{c}} \cos(\theta_{\text{TX},\ell,m})
\label{eq:Psi}
\ee
 for $n \in \{0,1,\ldots, N-1\}$, where $d$ is the absolute
antenna spacing and $\lambda_\text{c}$ is the signal wavelength.
We assume that legitimate and spoofing signals
are perfectly synchronized \cite{Miller}.

The discrete-time signal vector $\by(k) \in \C^{\Nr}$
received by Alice within the $k$th symbol period can be expressed as
\begin{multline}
\by(k) = \sum_{m=1}^M \left(\sqrt{\Pam} \, \bA_{\text{B},m} \, \bhbm
\right. \\ \left.
+ \sqrt{\Psm}  \, \bA_{\text{E},m} \,  \bh_{\text{E},m} \right) p_m(k) + \bv(k)
\label{eq:sig-spatial}
\end{multline}
with $k \in \mathcal{K}$, where, for $\text{TX} \in \{\text{B}, \text{E}\}$,
\begin{multline}
\bA_{\text{TX},m}  \eqdef \frac{1}{\sqrt{L_{\text{TX},m}}}
\left[ \ba(\theta_{\text{TX},1,m}), \ba(\theta_{\text{TX},2,m}), \ldots,
\right. \\ \left.
\ba(\theta_{\text{TX},L_{\text{TX},m},m})\right] \in \C^{N \times L_{\text{TX},m}}
\label{eq:A}
\end{multline}
\be
\bh_{\text{TX},m}  \eqdef \left[ h_{\text{TX},1,m}, h_{\text{TX},2,m}, \ldots,
h_{\text{TX},L_{\text{TX},m},m} \right]^\trasp \in \C^{L_{\text{TX},m}}
\label{eq:h}
\ee
and
$\bv(k) \sim {\cal CN}(\mathbf{0}_{\Nr},\sigma_v^2 \, \bI_{\Nr})$
is additive white Gaussian noise,
with $\bv(k_1)$ and $\bv(k_2)$ statistically
independent of each other for $k_1 \neq k_2 \in \mathcal{K}$.
Hereinafter, $\bA_{\text{TX},m}$ is referred to as the $m$th \textit{steering matrix}.
We assume a Rayleigh fading model, according to which
$\bhbm \sim
{\cal CN}(\mathbf{0}_{L_{\text{B},m}}, \bI_{L_{\text{B},m}})$ and
$\bh_{\text{E},m}
\sim  {\cal CN}(\mathbf{0}_{L_{\text{E},m}}, \bI_{L_{\text{E},m}})$
are mutually independent vectors, statistically independent of
$\bv(k)$, $\forall k \in \mathcal{K}$ and $ \forall m \in \mathcal{M}$.

Let us gather all the data \eqref{eq:sig-spatial}
received during the uplink pilot phase in
$\bY \eqdef [\by(0), \by(1), \ldots, \by(K-1)] \in \C^{\Nr \times K}$, thus obtaining
the signal model\footnote{For $\text{TX} \in \{\text{B}, \text{E}\}$,
$\EuScript{P}_{\text{TX},m}$ also represents the average received power
during the training phase, since
$\Es [\|\bA_{\text{TX},m} \, \bh_{\text{TX},m} \, \bp^\trasp\|^2 ]=1$.
}
\begin{multline}
\bY=\sum_{m=1}^M \left(\sqrt{\Pam} \, \bA_{\text{B},m} \, \bh_{\text{B},m}
\right. \\ \left. +  \sqrt{\Psm}  \, \bA_{\text{E},m} \,  \bh_{\text{E},m} \right) \bp_m^\trasp + \bV
\label{eq:sig-mat}
\end{multline}
where $\bV \eqdef [\bv(0), \bv(1), \ldots, \bv(K-1)] \in \C^{\Nr \times K}$.

Let us focus on the channel estimation process of the $\overline{m}$th user,
with $\overline{m} \in \mathcal{M}$. In this case, by capitalizing on
the orthogonality among the pilot vectors,  Alice performs the
correlation of the received data $\bY$ with $\bp_{\overline{m}}$, thus
obtaining
\be
\by_{\overline{m}} \eqdef \bY \, \bp_{\overline{m}}^* = \bK_{\text{B},\overline{m}}
\, \bh_{\text{B},\overline{m}} +
\bK_{\text{E},\overline{m}} \, \bh_{\text{E},\overline{m}} + \bv_{\overline{m}}
\label{eq:vec-model}
\ee
where we have defined $\bK_{\text{B},\overline{m}} \eqdef
\sqrt{\Pamoverline} \, \bA_{\text{B},\overline{m}} \in
\C^{\Nr \times L_{\text{B},\overline{m}}}$ and
$\bK_{\text{E},\overline{m}} \eqdef
\sqrt{\Psmoverline} \bA_{\text{E},\overline{m}} \in
\C^{\Nr \times L_{\text{E},\overline{m}}}$,  and, by assumption,
$\bv_{\overline{m}} \eqdef \bV \, \bp_{\overline{m}}^* \sim
{\cal CN}(\mathbf{0}_{\Nr},\sigma_v^2 \, \bI_{\Nr})$.

To simplify the notation, in the remaining part of the paper,
we will drop the subscript $\overline{m}$
in \eqref{eq:channel}--\eqref{eq:Psi}, \eqref{eq:A}--\eqref{eq:h},
and \eqref{eq:vec-model}, and study
different estimation strategies
for reliably acquiring the CSI of a generic Bob-to-Alice uplink in order to
design a suitable beamformer for the subsequent
Alice-to-Bobs downlink data transmission.
In all the considered cases, we assume that Alice has perfect knowledge of
the composite matrix $\bK_\text{B}$ (depending on
the transmit power $\Pa$ and
the steering matrix $\bA_\text{B}$). While  $\Pa$ is a system
parameter that is known \textit{a priori}, the matrix $\bA_\text{B}$
has to be estimated by Alice.
However, such a steering matrix varies much slower than $\bhb$ and, thus, it
can be estimated \cite{Tuncer,Chen,Chandran} in practice during a secure setup session.
Consequently, the matter boils down to estimate $\bhb$.\footnote{In principle,
Alice may perform a channel-unaware beamforming in downlink by relying only
on the knowledge of $\bK_\text{B}$, so-called \textit{angular beamforming}, at the
price of a capacity degradation \cite{AB}. Such a degradation might be even more
severe in terms of secrecy capacity in the presence of malicious Eves.}

In this paper, we study two different estimation strategies:
in the former one, following the classical approach, the entries of $\bhb$ are
assumed to be deterministic but unknown constants;
in the latter one, according to the Bayesian philosophy,
the knowledge of the pdf
of $\bhb$ is exploited to estimate its  parti\-cular realization.
As a performance measure of the considered estimators, we resort to
the BMSE, which is defined as
\be
\text{BMSE}(\widehat{\bh}_{\text{B}}) \eqdef \Es \left[\left \|
\widehat{\bh}_{\text{B}} -\bh_\text{B} \right\|^2 \right]
\label{eq:AMSE}
\ee
where $\widehat{\bh}_{\text{B}} \in \C^{L_\text{B}}$ denotes an estimate of
$\bhb$ and, unless otherwise specified, the expectation is taken with respect to the
pdf of the triple
$(\bhb, \bh_{\text{E}}, \bv)$.\footnote{
\label{foot:5}The
BMSE is a reasonable performance metric in fading channels not only
for Bayesian estimators, but also for the classical ones
that are designed under the deterministic assumption for $\bhb$.}

\vspace{-3mm}

\section{Least squares and maximum likelihood estimators}
\label{sec:LSE&MLE}

The channel estimators derived  herein  are
based on the assumption that both $\bhb$
and $\bK_\text{E}$ are deterministic
but unknown quantities. In this case, we can develop different
\textit{unbiased} estimators on the basis of the amount of knowledge regarding the
spoofing attack. When Alice is unaware of
the Eve's presence, the LSE can be used by Alice to estimate
$\bhb$. On the other hand, if Alice has perfect knowledge of the
correlation matrix of the \textit{disturbance (i.e., spoofing signal plus noise)}
(see the forthcoming discussion), it can implement the MLE to accomplish the same task.

\subsection{Least squares estimator}
\label{sec:LSE}

The LSE is defined \cite{Kay} as
\be
\widehat{\bh}_{\text{B},\text{LS}} \eqdef \arg \min_{\bh_\text{B} \in \C^{L_\text{B}}}
\left\|\by - \bK_\text{B} \, \bhb\right\|^2 \: .
\label{eq:LSE-def}
\ee
Under the assumption
that $\bK_\text{B}$ is full-column rank, i.e.,
$\rank(\bK_\text{B})=\rank(\bA_\text{B})=L_\text{B}$, the LSE is {\em unique} and it can
be written as (see \cite{Kay})
\be
\widehat{\bh}_{\text{B},\text{LS}} = \left(\bK_\text{B}^\herm  \, \bK_\text{B}\right)^{-1} \bK_\text{B}^\herm   \, \by \: .
\label{eq:LSE}
\ee
Since $\bA_\text{B}$ is a Vandermonde-like matrix, the condition
$\rank(\bK_\text{B})=\rank(\bA_\text{B})=L_\text{B}$ is fulfilled \cite{Horn.book.1990} if
$N \ge L_\text{B}$
and
$\theta_{\text{B},1} \neq \theta_{\text{B},2} \neq \cdots \neq
\theta_{\text{B},L_\text{B}}$ (i.e., the AoAs through the multipath
channel between Bob and Alice are distinct).
On the other hand,  if $\rank(\bK_\text{B}) < L_\text{B}$, problem
\eqref{eq:LSE-def} has infinitely many solutions and
$\bhb$ is not identifiable, i.e., if $\widehat{\bh}_{\text{B},\text{LS}}$
is a solution of $\eqref{eq:LSE-def}$ and $\balpha \in \nullo(\bK_\text{B})$, then
$\widehat{\bh}_{\text{B},\text{LS}}+\balpha$
is another solution of \eqref{eq:LSE-def}.
The synthesis of the LSE  involves knowledge of $\bK_\text{B}$ only
and its computational burden is dominated by the matrix inversion
in \eqref{eq:LSE}, which involves
$\mathcal{O}(L_\text{B}^3)$ floating point
operations (flops) \cite{Loan} if computed from scratch.

The LSE \eqref{eq:LSE} is unbiased
and, by using the properties of the
Kronecker product and the trace operator,
its BMSE \eqref{eq:AMSE}  can
be expressed as follows
\begin{multline}
\text{BMSE}_\text{LS}
\eqdef \frac{\trace\left[\bA_\text{B} \left(\bA_\text{B}^\herm \, \bA_\text{B}\right)^{-2}
\bA_\text{B}^\herm \, \bA_\text{E} \, \bA_\text{E}^\herm \right]}{\text{SSR}}
\\ +
\frac{\trace\left[\left(\bA_\text{B}^\herm \, \bA_\text{B}\right)^{-1}\right]}{\text{SNR}_\text{B}}
\label{eq:AMSE-LSE}
\end{multline}
where we have defined the SNR of the Bob transmission
as $\text{SNR}_\text{B} \eqdef \Pa /\sigma_v^2$, whereas
$\text{SSR} \eqdef \Pa/{\Ps}$ represents the
signal-to-spoofing ratio (SSR).
It is noteworthy from \eqref{eq:AMSE-LSE}
that, in the absence of noise, i.e.,
as $\sigma_v^2$ approaches to zero, the BMSE of the LSE
exhibits a saturation effect, namely, a floor given by
\barr
\overline{\text{BMSE}}_\text{LS} & \eqdef \lim_{\sigma_v^2 \to 0}
\text{BMSE}_\text{LS}
\nonumber \\ &
=  \frac{\trace\left[\bA_\text{B} \left(\bA_\text{B}^\herm \, \bA_\text{B}\right)^{-2}
\bA_\text{B}^\herm \, \bA_\text{E} \, \bA_\text{E}^\herm \right]}{\text{SSR}}
\label{eq:AMSE-LSE-asympt}
\earr
which is due to the malicious pilot transmission of Eve.
The following lemma provides bounds on
$\overline{\text{BMSE}}_\text{LS}$.

\begin{lemma}
\label{lem:1}
The BMSE floor of \eqref{eq:LSE} is bounded  as
\barr
\frac{1}{\text{SSR}} \sum_{\ell=1}^{L_\text{B}}
\frac{\sigma_{N-\ell+1}^2(\bA_\text{E})}{\sigma_{\ell}^2(\bA_\text{B})}
 \le \overline{\text{BMSE}}_\text{LS} & \le
\frac{1}{\text{SSR}} \sum_{\ell=1}^{L_\text{B}}
\frac{\sigma_{\ell}^2(\bA_\text{E})}{\sigma_{\ell}^2(\bA_\text{B})}
\label{eq:AMSE-LSE-asympt-bound}
\earr
where
$\sigma_{1}(\bA_\text{B}) \le \sigma_{2}(\bA_\text{B}) \le \cdots \le
\sigma_{L_\text{B}}(\bA_\text{B})$
are the \textit{nonzero} singular values
of $\bA_\text{B}$
arranged in increasing order and
$\sigma_{1}(\bA_\text{E}) \ge \sigma_{2}(\bA_\text{E}) \ge \cdots \ge
\sigma_{N}(\bA_\text{E})$
are the singular values of $\bA_\text{E}$
arranged in decreasing order.

\end{lemma}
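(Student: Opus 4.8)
The plan is to strip off the common factor $1/\text{SSR}$ and reduce the claim to two-sided bounds on the trace $T \eqdef \trace[\bM \, \bN]$, where $\bM \eqdef \bA_\text{B} \, (\bA_\text{B}^\herm \, \bA_\text{B})^{-2} \, \bA_\text{B}^\herm \in \C^{N \times N}$ and $\bN \eqdef \bA_\text{E} \, \bA_\text{E}^\herm \in \C^{N \times N}$. Both matrices are Hermitian and positive semidefinite, so $T$ is the trace of a product of two Hermitian positive semidefinite matrices, which is precisely the setting of the von Neumann trace inequality for Hermitian matrices.

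First I would diagonalize $\bM$. Writing the (thin) SVD of the full-column-rank matrix $\bA_\text{B}$ as $\bA_\text{B} = \bU_\text{B} \, \bSigma_\text{B} \, \bV_\text{B}^\herm$, with $\bU_\text{B} \in \C^{N \times L_\text{B}}$ having orthonormal columns, $\bSigma_\text{B} = \diag(\sigma_1(\bA_\text{B}), \ldots, \sigma_{L_\text{B}}(\bA_\text{B}))$, and $\bV_\text{B}$ unitary, the factors $\bV_\text{B}$ telescope and one obtains $\bM = \bU_\text{B} \, \bSigma_\text{B}^{-2} \, \bU_\text{B}^\herm$. Hence $\bM$ is Hermitian positive semidefinite of rank $L_\text{B}$, its nonzero eigenvalues are exactly $\{1/\sigma_\ell^2(\bA_\text{B})\}_{\ell=1}^{L_\text{B}}$, and it carries $N - L_\text{B}$ additional zero eigenvalues. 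Because the $\sigma_\ell(\bA_\text{B})$ are indexed in increasing order, the eigenvalues of $\bM$ sorted in decreasing order are $\mu_i = 1/\sigma_i^2(\bA_\text{B})$ for $i \le L_\text{B}$ and $\mu_i = 0$ for $i > L_\text{B}$. Likewise, the eigenvalues of $\bN$ sorted in decreasing order are $\nu_i = \sigma_i^2(\bA_\text{E})$, $i = 1, \ldots, N$, with trailing zeros when $L_\text{E} < N$.

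Then I would invoke the von Neumann trace inequality: if $\mu_1 \ge \cdots \ge \mu_N$ and $\nu_1 \ge \cdots \ge \nu_N$ denote the ordered eigenvalues of $\bM$ and $\bN$, then $\sum_{i=1}^N \mu_i \, \nu_{N-i+1} \le \trace[\bM \, \bN] \le \sum_{i=1}^N \mu_i \, \nu_i$, the upper bound pairing the largest eigenvalues of $\bM$ with the largest eigenvalues of $\bN$ and the lower bound pairing them with the smallest. Substituting the explicit spectra and discarding the terms with $\mu_i = 0$ collapses both sums to $L_\text{B}$ terms: the upper bound becomes $\sum_{\ell=1}^{L_\text{B}} \sigma_\ell^2(\bA_\text{E})/\sigma_\ell^2(\bA_\text{B})$, while in the lower bound the surviving indices $i \le L_\text{B}$ pick out $\nu_{N-i+1} = \sigma_{N-i+1}^2(\bA_\text{E})$, giving $\sum_{\ell=1}^{L_\text{B}} \sigma_{N-\ell+1}^2(\bA_\text{E})/\sigma_\ell^2(\bA_\text{B})$. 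Reinstating the factor $1/\text{SSR}$ yields \eqref{eq:AMSE-LSE-asympt-bound}.

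The routine part is the algebraic simplification of $\bM$; the only point requiring genuine care is the bookkeeping of the two opposite ordering conventions — increasing singular values for $\bA_\text{B}$, decreasing for $\bA_\text{E}$ — against the single decreasing order demanded by the inequality, together with the rank deficiency of $\bM$. One must place its $N - L_\text{B}$ zero eigenvalues at the bottom of the spectrum, so that in the lower bound they annihilate exactly the pairings $\mu_i \, \nu_{N-i+1}$ with $i > L_\text{B}$ and the surviving terms correctly carry the smallest $L_\text{B}$ singular values $\sigma_{N-\ell+1}(\bA_\text{E})$ of $\bA_\text{E}$.
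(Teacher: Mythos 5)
Your proof is correct and takes essentially the same route as the paper's: both reduce the floor to $\trace[\bA_\text{B}(\bA_\text{B}^\herm\bA_\text{B})^{-2}\bA_\text{B}^\herm\,\bA_\text{E}\,\bA_\text{E}^\herm]$ and apply the two-sided von Neumann--type trace inequality for Hermitian positive semidefinite matrices with decreasingly ordered eigenvalues, handling the rank deficiency and the opposite ordering conventions in the same way. The only cosmetic difference is that you compute the spectrum of $\bA_\text{B}(\bA_\text{B}^\herm\bA_\text{B})^{-2}\bA_\text{B}^\herm$ explicitly via the thin SVD, whereas the paper simply cites that this matrix shares its nonzero eigenvalues with $(\bA_\text{B}^\herm\bA_\text{B})^{-1}$.
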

\begin{proof}
See Appendix~\ref{app:lem-1}.
\end{proof}

Lemma~\ref{lem:1} enlightens that the spoofing attack
might seriously affect the performance of \eqref{eq:LSE}, which
depends not only on the $\text{SSR}$, but also on the ratio
between the singular values of $\bA_\text{E}$
and $\bA_\text{B}$. In particular,
the lower bound in \eqref{eq:AMSE-LSE-asympt-bound}
shows that, except for the limit case $\text{SSR} \to + \infty$,
the floor $\overline{\text{BMSE}}_\text{LS}$ cannot be
zero if the $L_\text{B}$ smallest singular values of
$\bA_\text{E}$ are nonzero. This happens when
the rank of $\bA_\text{E}$ is greater than the
number of paths between Bob and Alice, that is,
compared to the legitimate channel,
the spoofing one is characterized by a
richer scattering with significant multipath components.

\subsection{Maximum likelihood estimator}
\label{sec:ML}

Let $p(\by; \bhb)$ denote the pdf
of $\by$, parameterized by $\bhb$. The MLE is the solution of
the maximization problem
\be
\widehat{\bh}_{\text{B},\text{ML}}
\eqdef \arg \max_{\bhb \in \C^{L_\text{B}}} p(\by; \bhb) \: .
\label{eq:MLE}
\ee
Since the disturbance
$\d \eqdef \bK_\text{E} \, \bh_{\text{E}} + \bv \sim {\cal CN}(\mathbf{0}_{\Nr}, \bR_{\bd\bd})$, with $\bR_{\bd\bd} \eqdef \Es[\bd \, \bd^\herm]=
\bK_\text{E} \, \bK_\text{E}^\herm + \sigma_v^2 \, \bI_{\Nr}$
being its correlation matrix (depending on $\bK_\text{E}$),
it results that $\widehat{\bh}_{\text{B},\text{ML}}$ is the
solution of the matrix equation (see, e.g., \cite{Kay})
\be
\left(\bK_\text{B}^\herm \, \bR_{\bd\bd}^{-1} \, \bK_\text{B} \right)
\widehat{\bh}_{\text{B},\text{ML}} =
\bK_\text{B}^\herm \, \bR_{\bd\bd}^{-1} \, \by \: .
\label{eq:MLE-eq}
\ee
If $\bK_\text{B}$ is full-column rank, the MLE is unique
and given by
\be
\widehat{\bh}_{\text{B},\text{ML}} =
\left(\bK_\text{B}^\herm \, \bR_{\bd\bd}^{-1} \, \bK_\text{B} \right)^{-1}
\bK_\text{B}^\herm \, \bR_{\bd\bd}^{-1} \, \by \: .
\label{eq:MLE-sol}
\ee
On the other hand, if the columns of $\bK_\text{B}$
are linearly dependent, there exists an infinite number of
solutions for  \eqref{eq:MLE-eq}
that generate the same density function and, thus,
similarly to the LSE, $\bhb$ is not identifiable in this case.

Compared to the LSE in \eqref{eq:LSE}, the MLE requires
the additional knowledge of the correlation matrix of the
disturbance, which in its turn depends on the noise variance
$\sigma_v^2$ and
the composite matrix $\bK_\text{E}$ (determined by
the transmit power $\Ps$ and
the steering matrix $\bA_\text{E}$).
The noise variance $\sigma_v^2$ is related to the noise figure of
Alice and, thus, it can be known \textit{a priori} or estimated previously.
On the other hand, both
$\Ps$ and $\bA_\text{E}$ are unknown at Alice.
In principle, one can estimate $\bR_{\bd\bd}$ from the received data,
by observing that the correlation matrix of $\by$ can be expressed as
\be
\bR_{\by\by} \eqdef \Es[\by \, \by^\herm]=
\bK_\text{B} \, \bK_\text{B}^\herm + \bR_{\bd\bd} \: .
\label{eq:Ryy}
\ee
Given a \textit{sample} estimate $\bS_{\by\by}$ of
$\bR_{\by\by}$ and knowledge of $\bK_\text{B}$, a corresponding sample estimate of
$\bR_{\bd\bd}$ can be obtained as
$\bS_{\bd\bd}=\bS_{\by\by}-\bK_\text{B} \, \bK_\text{B}^\herm$.
Estimation
of $\bR_{\by\by}$ from the received data will be discussed in
Subsection~\ref{sec:MMSE-1}.
Since $\Nr$ is tipically much larger than $L_\text{B}$, the computational
complexity of the MLE is mainly dictated by the inversion of
$\bR_{\bd\bd}$, which requires $\mathcal{O}(\Nr^3)$
flops
if one resorts to batch algorithms.

The MLE \eqref{eq:MLE-sol} is unbiased and
its BMSE \eqref{eq:AMSE}  can be expressed \cite{Kay} as
\be
\text{BMSE}_\text{ML} = \trace
\left[ \left(\bK_\text{B}^\herm \, \bR_{\bd\bd}^{-1}
\, \bK_\text{B} \right)^{-1} \right] \: .
\label{eq:AMSE-ML}
\ee
The MLE is also an efficient estimator since it attains the
{\em standard (i.e., for nonrandom parameter estimation)}
Cramer-Rao lower bound (CRLB) \cite{Kay}, that is
$\text{BMSE}_\text{ML} \le \text{BMSE}(\widehat{\bh}_{\text{B}})$,
for any unbiased estimator $\widehat{\bh}_{\text{B}}$ of
$\bhb$. In particular, in the considered scenario, one gets
$\text{BMSE}_\text{ML} \le \text{BMSE}_\text{LS}$.

Similarly to the LSE, our aim is to characterize the spoofing suppression
capabilities of the MLE in the
high-SNR regime. Such a characterization is provided by
the following lemma.

\begin{lemma}
\label{lem:2}
If the subspaces $\range(\bA_\text{B})$ and $\range(\bA_\text{E})$
are \textit{nonoverlapping} or \textit{disjoint}, i.e.,
\be
\range(\bA_\text{B}) \cap \range(\bA_\text{E})=\{\mathbf{0}_{\Nr}\}
\label{eq:nooverlap}
\ee
then perfect spoofing cancellation is achieved in the absence of noise, that is
\be
\overline{\text{BMSE}}_\text{ML} \eqdef \lim_{\sigma_v^2 \to 0}
\text{BMSE}_\text{ML}=0 \: .
\ee
\end{lemma}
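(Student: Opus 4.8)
The plan is to show that the Fisher information matrix $\bK_\text{B}^\herm \, \bR_{\bd\bd}^{-1} \, \bK_\text{B}$ appearing in \eqref{eq:AMSE-ML} diverges in the positive-definite sense as $\sigma_v^2 \to 0$, so that its inverse---and hence its trace---vanishes. First I would diagonalize the disturbance correlation matrix against the geometry of $\range(\bA_\text{E})$. Let $r \eqdef \rank(\bA_\text{E})$ and pick an orthonormal basis $\bU=[\bU_1, \bU_2]$ of $\C^{\Nr}$ with $\range(\bU_1)=\range(\bA_\text{E})$ and $\range(\bU_2)=\Orange(\bA_\text{E})$. Since $\bK_\text{E}\,\bK_\text{E}^\herm=\bU_1\,\bLambda_1\,\bU_1^\herm$ for a diagonal $\bLambda_1 \succ \Zero$ of size $r$, the eigendecomposition of $\bR_{\bd\bd}=\bK_\text{E}\,\bK_\text{E}^\herm+\sigma_v^2\,\bI_{\Nr}$ immediately gives
\be
\bR_{\bd\bd}^{-1}=\bU_1\,(\bLambda_1+\sigma_v^2\,\bI_r)^{-1}\,\bU_1^\herm+\frac{1}{\sigma_v^2}\,\bU_2\,\bU_2^\herm \: .
\ee
The isolated term $\frac{1}{\sigma_v^2}\,\bU_2\,\bU_2^\herm$ is the one that blows up, and it lives in $\Orange(\bA_\text{E})$.

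The key step is to route the hypothesis \eqref{eq:nooverlap} through this term. Substituting into $\bK_\text{B}^\herm\,\bR_{\bd\bd}^{-1}\,\bK_\text{B}$ yields $\bM(\sigma_v^2)+\frac{1}{\sigma_v^2}\,\bG$, where $\bM(\sigma_v^2)\eqdef(\bU_1^\herm\,\bK_\text{B})^\herm\,(\bLambda_1+\sigma_v^2\,\bI_r)^{-1}\,(\bU_1^\herm\,\bK_\text{B}) \succeq \Zero$ remains bounded, and $\bG\eqdef(\bU_2^\herm\,\bK_\text{B})^\herm\,(\bU_2^\herm\,\bK_\text{B})$. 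I would then prove $\bG \succ \Zero$: if $\bG\,\bx=\mathbf{0}_{L_\text{B}}$ for some $\bx$, then $\bU_2^\herm\,\bK_\text{B}\,\bx=\mathbf{0}_{\Nr}$, i.e.\ $\bK_\text{B}\,\bx\in\range(\bU_1)=\range(\bA_\text{E})$; but $\bK_\text{B}\,\bx\in\range(\bA_\text{B})$ as well, so \eqref{eq:nooverlap} forces $\bK_\text{B}\,\bx=\mathbf{0}_{\Nr}$, and full-column-rankness of $\bK_\text{B}$ then gives $\bx=\mathbf{0}_{L_\text{B}}$. Hence $\bU_2^\herm\,\bK_\text{B}$ has full column rank $L_\text{B}$ and $\bG$ is positive definite.

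With $\bG \succ \Zero$ secured, the conclusion follows from monotonicity of the matrix inverse. Because $\bM(\sigma_v^2)\succeq\Zero$ for every $\sigma_v^2>0$, we have $\bK_\text{B}^\herm\,\bR_{\bd\bd}^{-1}\,\bK_\text{B}\succeq\frac{1}{\sigma_v^2}\,\bG$, whence $(\bK_\text{B}^\herm\,\bR_{\bd\bd}^{-1}\,\bK_\text{B})^{-1}\preceq\sigma_v^2\,\bG^{-1}$. Taking traces in \eqref{eq:AMSE-ML} and using that $\bP\preceq\bQ$ implies $\trace(\bP)\le\trace(\bQ)$,
\be
0 \le \text{BMSE}_\text{ML}\le\sigma_v^2\,\trace(\bG^{-1}) \: ,
\ee
so letting $\sigma_v^2\to 0$ gives $\overline{\text{BMSE}}_\text{ML}=0$, as claimed.

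The only genuinely delicate point is the rank argument converting the geometric disjointness \eqref{eq:nooverlap} into positive definiteness of $\bG$; everything else is routine linear algebra on the blow-up of $\bR_{\bd\bd}^{-1}$. I would also remark that this argument clarifies \emph{why} the LSE floor of Lemma~\ref{lem:1} is not removed in the same way: the LSE discards $\bR_{\bd\bd}$ altogether and therefore never amplifies the component of $\by$ lying in $\Orange(\bA_\text{E})$, which is precisely the noise-free, spoofing-free direction that the MLE exploits to drive its error to zero.
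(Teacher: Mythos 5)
Your proof is correct, and it takes a genuinely different route from the paper's. The paper's Appendix~B applies the matrix inversion lemma twice to obtain the exact decomposition
$\text{BMSE}_\text{ML} = \sigma_v^2 \, \trace[ (\bK_\text{B}^\herm \bK_\text{B})^{-1} ] + \sigma_v^2 \, \trace[ \bK_\text{B}^\dag \, \bK_\text{E} ( \bK_\text{E}^\herm \bP_\text{B} \bK_\text{E} + \sigma_v^2 \bI_{L_\text{E}} )^{-1} \bK_\text{E}^\herm (\bK_\text{B}^\herm)^\dag ]$,
with $\bP_\text{B}$ the orthogonal projector onto $\nullo(\bK_\text{B}^\herm)$, and then reasons about the second summand: it yields a nonzero floor when $\range(\bK_\text{E}) \subseteq \range(\bK_\text{B})$ and vanishes when $\bK_\text{B}^\dag \bK_\text{E} = \bO$. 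You instead diagonalize $\bR_{\bd\bd}$ along $\range(\bA_\text{E})$, isolate the $\sigma_v^{-2}\,\bU_2 \bU_2^\herm$ blow-up on $\Orange(\bA_\text{E})$, convert \eqref{eq:nooverlap} plus the full-column-rank assumption on $\bK_\text{B}$ (which is anyway required for \eqref{eq:AMSE-ML} to hold) into $\bG \succ \Zero$, and close with Loewner monotonicity and the explicit bound $\text{BMSE}_\text{ML} \le \sigma_v^2 \, \trace(\bG^{-1})$. Each approach buys something: the paper's exact expression additionally exhibits the saturation floor in the overlapping case, and your bound gives an explicit $O(\sigma_v^2)$ decay rate; more significantly, your argument addresses the stated hypothesis \eqref{eq:nooverlap} directly, whereas the paper's final step treats ``$\range(\bK_\text{E}) \subseteq \Orange(\bK_\text{B})$'' as \emph{equivalent} to disjointness. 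In fact $\bK_\text{B}^\dag \bK_\text{E} = \bO$ characterizes orthogonality ($\range(\bK_\text{E}) \subseteq \nullo(\bK_\text{B}^\herm)$), which is strictly stronger than \eqref{eq:nooverlap}; for disjoint-but-nonorthogonal subspaces the paper's route needs the extra observation that $\sigma_v^2 (\bK_\text{E}^\herm \bP_\text{B} \bK_\text{E} + \sigma_v^2 \bI_{L_\text{E}})^{-1}$ stays bounded as $\sigma_v^2 \to 0$, a gap that your rank argument for $\bG$ fills cleanly (it is the exact same geometric step: $\bK_\text{B}\bx$ cannot lie in $\range(\bA_\text{E})$ unless it is zero). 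One cosmetic slip: $\bU_2^\herm \bK_\text{B} \bx$ lives in $\C^{N-r}$, not $\C^{\Nr}$.
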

\begin{proof}
See Appendix~\ref{app:lem-2}.
\end{proof}

A consequence of the above lemma is that,
for \eqref{eq:nooverlap} to hold, it suffices that the columns of
$\bA_\text{B}$ and $\bA_\text{E}$ are
linearly independent so that the (augmented) matrix
$[\bA_\text{B}, \bA_\text{E}] \in \C^{N \times (L_\text{B}+L_\text{E})}$
has full column rank. This condition is fulfilled if
$N \ge L_\text{B}+L_\text{E}$ and
$\theta_{\text{B},1} \neq \cdots \neq
\theta_{\text{B},L_\text{B}} \neq
\theta_{\text{E},1} \neq \cdots \neq
\theta_{\text{E},L_\text{E}}$, i.e., the AoAs of the multipath
Bob-to-Alice and Eve-to-Alice
channels are all distinct.
In a nutshell, we can state that, compared to the LSE, the MLE can
effectively counteract the pilot spoofing attack, at the
price however of requiring the knowledge of the correlation matrix of the
spoofing-plus-noise signal.

When condition \eqref{eq:nooverlap} is
not satisfied, i.e., the AoA ranges of
Bob and Eve are overlapping, perfect spoofing cancellation is impossible, even in the absence of noise. However, overlapping between the subspaces $\range(\bA_\text{B})$ and
$\range(\bA_\text{E})$ depends on the distribution of the AoAs, which is governed by the physical propagation environment, as well as on the locations of Bob and Eve.
Therefore, it is unlikely that condition \eqref{eq:nooverlap}
is violated at all times, due to the random transmitter locations
and scattering effects. Moreover, numerical results in Section~\ref{sec:simulation} show that the MLE is robust when the difference between
the angles of incidence at which
Bob and Eve arrive at Alice tends to zero for a given path.

\section{Minimum Mean Square Error Estimators}
\label{sec:MMSEE}

In this section, we consider
the Bayesian approach to statistical estimation \cite{Kay},
by capitalizing on the fact that $\bhb \sim
{\cal CN}(\mathbf{0}_{L_\text{B}}, \bI_{L_\text{B}})$.
Such an approach is different from the classical one
pursued in Section~\ref{sec:LSE&MLE}.
Compared to LSE and MLE,
Bayesian estimators
improve estimation accuracy by exploiting
\textit{a priori}
information about the pdf of $\bhb$.
Moreover, the class of Bayesian
estimators is not restricted to the
unbiased ones \cite{VanTrees,Dong.2002}.

Herein, we consider three different
MMSEEs based on different \textit{a priori} information about the attack of Eve.
In the first one, Alice does
not have any information regarding
$\bK_\text{E}$, which is modeled
as a deterministic but unknown matrix
(as already done in Section~\ref{sec:LSE&MLE}).
In the second one, it is assumed
that Alice has an imperfect knowledge
$\widehat{\bK}_\text{E}$ of $\bK_\text{E}$.
In the third one, besides $\widehat{\bK}_\text{E}$,
Alice also knows the statistical
characterization of the corresponding error.

\subsection{Case 1: No \text{a priori} knowledge about $\bK_\text{E}$}
\label{sec:MMSE-1}

Under the assumption that $\bK_\text{E}$ is deterministic, the optimal MMSEE minimizes
\eqref{eq:AMSE} and is given \cite{Kay} by
the mean of the {\em posterior} distribution of
$\bhb$, i.e.,
\be
\bhbhatmmse
=\Es[\bhb \, | \, \by] \: .
\label{eq:generalBayes}
\ee
In this case, the vectors
$\by$ and $\bhb$ are jointly complex Gaussian
and, hence,
the conditional distribution of $\bhb \, | \, \by$ is complex
Gaussian, too. Therefore, the
MMSEE \eqref{eq:generalBayes} turns out to be \textit{linear} and
assumes the form (see, e.g., \cite{Kay})
\barr
\bhbhatmmseuno & =
\left(\bI_{L_\text{B}}+ \bK_\text{B}^\herm \, \bR_{\bd\bd}^{-1} \,  \bK_\text{B} \right)^{-1}
\bK_\text{B}^\herm \, \bR_{\bd\bd}^{-1} \, \by
\nonumber \\ &=
\bK_\text{B}^\herm \, \bR_{\by\by}^{-1} \, \by
\label{eq:MMSE-S1}
\earr
where the matrix inversion lemma \cite{Horn.book.1991} has been used.
The corresponding minimum BMSE is \cite{Kay}:
\be
\text{BMSE}_\text{MMSE} =\trace\left[\left(\bI_{L_\text{B}}+\bK_\text{B}^\herm \, \bR_{\bd\bd}^{-1} \,  \bK_\text{B}\right)^{-1}\right] \: .
\label{eq:BMMSE-1}
\ee
We recall that both the LSE and MLE require
that $\bK_\text{B}$ be full column rank. As discussed
in Subsection~\ref{sec:LSE}, such a condition is met if and only if
the number of antennas at Alice is not smaller than the
number of paths from Bob to Alice and, moreover,
the corresponding AoAs are distinct.
In contrast, it can be seen from \eqref{eq:MMSE-S1} that, in the case
at hand, the MMSEE requires
the invertibility of
$\bI_{L_\text{B}}+ \bK_\text{B}^\herm \, \bR_{\bd\bd}^{-1} \,  \bK_\text{B}$.
For this to hold, $\bK_\text{B}$ need not be full column rank.
Therefore, the MMSEE \eqref{eq:MMSE-S1} can exist even if $N < L_\text{B}$ and/or
$\theta_{\text{B},1}, \theta_{\text{B},2}, \ldots,  \theta_{\text{B},L_\text{B}}$
are not distinct. However, under these circumstances, as shown soon after, the performance of the MMSEE \eqref{eq:MMSE-S1} is adversely affected by the spoofing attack, even in the absence of noise.
The synthesis of \eqref{eq:MMSE-S1} requires $\mathcal{O}(\Nr^3)$
flops to invert $\bR_{\by\by}$.

It is shown in Appendix~\ref{app:FIM-1} that the MMSEE \eqref{eq:MMSE-S1} attains the \textit{Bayesian CRLB} \cite{VanTrees}. Therefore, since the error of the MMSEE cannot be larger than that of the
maximum {\em a posteriori probability estimator (MAPE)} \cite{VanTrees},
the MMSEE and MAPE are equal in this case.
Similarly to the MLE, the MMSEE \eqref{eq:MMSE-S1} can avoid the
spoofing attack in the high-SNR region, as stated by the following lemma.

\begin{lemma}
\label{lem:3}
If the matrix
$[\bA_\text{B}, \bA_\text{E}] \in \C^{N \times (L_\text{B}+L_\text{E})}$
has full column rank, then
\be
\overline{\text{BMSE}}_\text{MMSE}
\eqdef \lim_{\sigma_v^2 \to 0} \text{BMSE}_\text{MMSE}
=0 \: .
\ee
\end{lemma}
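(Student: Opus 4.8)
The plan is to show that, as $\sigma_v^2\to0$, every eigenvalue of the $L_\text{B}\times L_\text{B}$ matrix $\bI_{L_\text{B}}+\bK_\text{B}^\herm\,\bR_{\bd\bd}^{-1}\,\bK_\text{B}$ appearing in \eqref{eq:BMMSE-1} diverges to $+\infty$; the trace of its inverse then collapses to zero, which is precisely $\overline{\text{BMSE}}_\text{MMSE}=0$. The whole difficulty is thus concentrated in understanding how $\bR_{\bd\bd}^{-1}=(\bK_\text{E}\,\bK_\text{E}^\herm+\sigma_v^2\,\bI_\Nr)^{-1}$ blows up as the noise vanishes, and in isolating the part of this blow-up that survives after compression by $\bK_\text{B}$.

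First I would spectrally decompose the disturbance. Since the hypothesis that $[\bA_\text{B},\bA_\text{E}]$ is full column rank forces $\bA_\text{E}$ (hence $\bK_\text{E}$) to be full column rank, $\bK_\text{E}\,\bK_\text{E}^\herm$ has exactly $L_\text{E}$ positive eigenvalues, with eigenvectors spanning $\range(\bA_\text{E})$, and a zero eigenvalue of multiplicity $\Nr-L_\text{E}$ on $\Orange(\bA_\text{E})$. Writing $\bK_\text{E}\,\bK_\text{E}^\herm=\bU\,\bLambda\,\bU^\herm$ and inverting the two spectral blocks separately, I would split
\[
\bR_{\bd\bd}^{-1}=\bU_1\,(\bLambda_1+\sigma_v^2\,\bI_{L_\text{E}})^{-1}\,\bU_1^\herm+\frac{1}{\sigma_v^2}\,\bP^{\perp},
\]
where $\bU_1$ is an orthonormal basis of $\range(\bA_\text{E})$, $\bLambda_1\succ\bzero$ collects the positive eigenvalues, and $\bP^{\perp}\eqdef\bI_\Nr-\bA_\text{E}(\bA_\text{E}^\herm\,\bA_\text{E})^{-1}\bA_\text{E}^\herm$ is the orthogonal projector onto $\Orange(\bA_\text{E})$. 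Substituting gives $\bK_\text{B}^\herm\,\bR_{\bd\bd}^{-1}\,\bK_\text{B}=\bM_0(\sigma_v^2)+\sigma_v^{-2}\,\bPhi$, with $\bM_0(\sigma_v^2)\eqdef\bK_\text{B}^\herm\,\bU_1\,(\bLambda_1+\sigma_v^2\,\bI_{L_\text{E}})^{-1}\,\bU_1^\herm\,\bK_\text{B}$ bounded and positive semidefinite as $\sigma_v^2\to0$, and $\bPhi\eqdef\bK_\text{B}^\herm\,\bP^{\perp}\,\bK_\text{B}=\Pa\,\bA_\text{B}^\herm\,\bP^{\perp}\,\bA_\text{B}$ independent of $\sigma_v^2$.

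The hard part will be proving that the residual coefficient $\bPhi$ is strictly positive definite, rather than merely positive semidefinite; this is the only step where the full-rank hypothesis is essential. I would argue on the quadratic form $\bx^\herm\,\bPhi\,\bx=\Pa\,\|\bP^{\perp}\,\bA_\text{B}\,\bx\|^2$: this vanishes for some $\bx\neq\bzero$ only if $\bA_\text{B}\,\bx\in\range(\bA_\text{E})$, say $\bA_\text{B}\,\bx=\bA_\text{E}\,\by$, which exhibits the nonzero null vector $[\bx^\trasp,-\by^\trasp]^\trasp$ of $[\bA_\text{B},\bA_\text{E}]$ and contradicts full column rank. Hence $\bPhi\succ\bzero$; let $\mu>0$ be its smallest eigenvalue.

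Finally I would close with an eigenvalue bound. Since $\bM_0(\sigma_v^2)\succeq\bzero$,
\[
\bI_{L_\text{B}}+\bK_\text{B}^\herm\,\bR_{\bd\bd}^{-1}\,\bK_\text{B}\succeq\Big(1+\tfrac{\mu}{\sigma_v^2}\Big)\,\bI_{L_\text{B}},
\]
so each of the $L_\text{B}$ eigenvalues of the inverse is at most $(1+\mu/\sigma_v^2)^{-1}$, yielding $0\le\text{BMSE}_\text{MMSE}\le L_\text{B}\,(1+\mu/\sigma_v^2)^{-1}$, which tends to $0$ as $\sigma_v^2\to0$. This establishes the claim. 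I note that the result is consistent with Lemma~\ref{lem:2}, since full column rank of $[\bA_\text{B},\bA_\text{E}]$ implies the nonoverlapping condition \eqref{eq:nooverlap}; the point of the stronger rank hypothesis here is to guarantee that the divergent term $\sigma_v^{-2}\,\bPhi$ acts on \emph{every} direction of $\C^{L_\text{B}}$.
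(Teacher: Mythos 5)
Your proof is correct, and it takes a genuinely different route from the paper's. The paper's own argument (Appendix~D) applies the matrix inversion lemma to rewrite $\bigl(\bI_{L_\text{B}}+\bK_\text{B}^\herm\,\bR_{\bd\bd}^{-1}\,\bK_\text{B}\bigr)^{-1}$ as $\bI_{L_\text{B}}-\bJ^\trasp\,\bK\,\bigl(\bK\,\bK^\herm+\sigma_v^2\,\bI_{\Nr}\bigr)^{-1}\bK^\herm\,\bJ$ with $\bK\eqdef[\bK_\text{B},\bK_\text{E}]$ and $\bJ\eqdef[\bI_{L_\text{B}},\bO_{L_\text{B}\times L_\text{E}}]^\trasp$, then invokes the limit formula for the Moore--Penrose inverse to get $\overline{\text{BMSE}}_\text{MMSE}=L_\text{B}-\trace\bigl[\bJ^\trasp\,(\bK^\dag\,\bK)^\herm\,\bJ\bigr]$, which vanishes because full column rank gives $\bK^\dag\,\bK=\bI_{L_\text{B}+L_\text{E}}$. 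You never pass to the pseudoinverse: you split $\bR_{\bd\bd}^{-1}$ spectrally into a bounded piece on $\range(\bA_\text{E})$ plus the divergent term $\sigma_v^{-2}\,\bP^{\perp}$ on $\Orange(\bA_\text{E})$, and you reduce everything to strict positive definiteness of $\bPhi=\Pa\,\bA_\text{B}^\herm\,\bP^{\perp}\,\bA_\text{B}$, which your null-vector argument correctly extracts from the rank hypothesis (this is where the augmented matrix enters in both proofs, but yours makes the geometric content explicit: no nonzero element of $\range(\bA_\text{B})$ lies in $\range(\bA_\text{E})$). Every step checks out: the two-block inverse of $\bR_{\bd\bd}$ is exact since full column rank of the augmented matrix does force $\bA_\text{E}$ full column rank; discarding the positive-semidefinite bounded piece is legitimate for a Loewner lower bound; and the bound $\bI_{L_\text{B}}+\bK_\text{B}^\herm\,\bR_{\bd\bd}^{-1}\,\bK_\text{B}\succeq(1+\mu/\sigma_v^2)\,\bI_{L_\text{B}}$ validly controls the trace of the inverse. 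What your route buys is a \emph{nonasymptotic, quantitative} statement, $\text{BMSE}_\text{MMSE}\le L_\text{B}\,\sigma_v^2/(\sigma_v^2+\mu)$: the BMSE decays linearly in $\sigma_v^2$ at a rate governed by $\mu=\lambda_{\min}(\bPhi)$, a natural measure of angular separation between the legitimate and spoofing subspaces, which connects nicely to the robustness experiments of Example~4 where $\phi\to 0$ drives $\mu\to 0$. What the paper's route buys is an \emph{exact} closed-form limit valid even when $\bK$ is rank deficient, so it quantifies the residual error floor when the hypothesis fails, whereas your bound degenerates (becomes vacuous with $\mu=0$) in that regime.
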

\begin{proof}
See Appendix~\ref{app:lem-3}.
\end{proof}

The full-column rank property of
$[\bA_\text{B}, \bA_\text{E}]$
is a  sufficient condition to ensure that the subspaces $\range(\bA_\text{B})$ and $\range(\bA_\text{E})$ are nonoverlapping (see Subsection~\ref{sec:ML}). Therefore,
both the MLE and MMSEE \eqref{eq:MMSE-S1}
perfectly reject the spoofing signal in the absence of noise,
provide that $\rank([\bA_\text{B}, \bA_\text{E}]) = L_\text{B}+L_\text{E}$.
As also pointed out at the end of Subsection~\ref{sec:ML}, violation of such a condition is
unlikely in practice. Moreover, the MMSEE is robust against a partial
overlap between the AoA ranges of Bob and Eve (see Section~\ref{sec:simulation}).

\begin{figure}
\centering
\includegraphics[width=1\columnwidth]{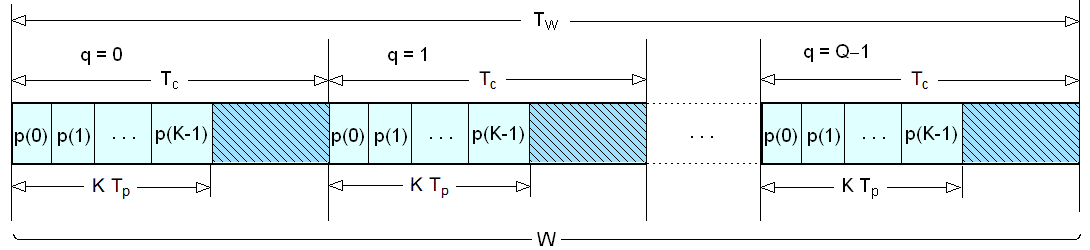}
\caption{A specific training design for estimating $\bR_{\by\by}$.}
\label{fig:fig_2}
\end{figure}

Apart from the knowledge of $\bK_\text{B}$,
the synthesis of the MMSEE \eqref{eq:MMSE-S1} requires estimation of
$\bR_{\by\by}$. Such a correlation matrix is the result
of an averaging operation taken over the noise, as well as
over the fading vectors $\bh_{\text{B}}$ and $\bh_{\text{E}}$.
Henceforth, estimation of $\bR_{\by\by}$ requires a dedicated
training session -- \textit{different from that used to estimate
$\bh_{\text{B}}$ through \eqref{eq:MMSE-S1}} --
spanning a time window $\mathcal{W}$, whose
duration $T_\text{w}$ is sufficiently larger
than the coherence time $T_\text{c}$ of the channel.
In principle, $\bR_{\by\by}$ can be consistently estimated by
the training scheme depicted in Fig.~\ref{fig:fig_2}.
Let $T_\text{p}$
be the period of the pilot symbols transmitted by Bob and Eve, the duration
$T_\text{w}$ can be divided in $Q$ coherence intervals of the channel,
i.e., $T_\text{w}= Q \, T_\text{c}$, and the number $K$ of pilot symbols $\bp$
(the same used to estimate $\bh_{\text{B}}$) can be
chosen such that $K \, T_\text{p}=T_\text{c}-T_\text{d}$, where
$T_\text{d}$ is the length of the downlink
information-bearing session.\footnote{We do not consider
uplink data transmissions in our discussion. However, if the legitimate users access
the uplink channel in an orthogonal fashion and Eves transmit jamming signals to
degrade the reception of the Bobs' data at Alice \cite{Wang.2019}, the estimation
accuracy of the correlation matrix $\bR_{\by\by}$ can be further improved by considering
uplink data symbols, too.}
If $\by^{(q)} \in \C^{\Nr}$ denote the data block \eqref{eq:vec-model} received by Alice during the
$q$th coherence interval of the channel within $\mathcal{W}$, for
$q \in \{0,1, \ldots, Q-1\}$, the correlation matrix $\bR_{\by\by}$
can be estimated as follows
\be
\bS_{\by\by} \eqdef \frac{1}{Q} \sum_{q=0}^{Q-1}
\by^{(q)} \left[\by^{(q)}\right]^\herm \: .
\label{eq:stima-Ryy}
\ee
Such a procedure gives a consistent estimate of $\bR_{\by\by}$,
provided that both Bob and Eve transmit during the time window $\mathcal{W}$
the same pilot vector $\bp$ used to estimate
$\bh_{\text{B}}$ through \eqref{eq:MMSE-S1}.
The
\textit{sample matrix inversion (SMI)} implementation of the MMSEE
\eqref{eq:MMSE-S1}
is obtained by replacing $\bR_{\by\by}$  in \eqref{eq:MMSE-S1}
with $\bS_{\by\by}$, that is
\barr
\bhbhatmmsesmi =
\bK_\text{B}^\herm \, \bS_{\by\by}^{-1} \, \by
\label{eq:MMSE-S1-smi}
\earr
which might exhibit a severe performance degradation
with respect to its ideal counterpart if
$Q$ is not sufficiently large.

To mitigate the performance degradation due to finite-
sample-size effects, one can resort to the \textit{subspace}
implementation of the MMSEE \eqref{eq:MMSE-S1},
by exploiting the properties of the eigenvalue decomposition
(EVD) of $\bR_{\by\by}$.\footnote{
Another viable alternative is represented by
shrinkage-based methods, which have the potential to enhance the performance of
correlation matrix estimation with small number of samples \cite{Stoica}.}
If
$[\bA_\text{B}, \bA_\text{E}]$
has full column rank (see Lemma~\ref{lem:3}),
accounting for \eqref{eq:Ryy} and
recalling that $\bR_{\bd\bd} =
\bK_\text{E} \, \bK_\text{E}^\herm + \sigma_v^2 \, \bI_{\Nr}$, the EVD
of $\bR_{\by\by}$ is given by
$\bR_{\by\by} =\bU_s \, \bLambda_s\,  \bU_s^\herm + \sigma_v^2 \, \bU_n \, \bU_n^\herm$, where $\bU_s \in \C^{\Nr \times (L_\text{B}+L_\text{E})}$ collects
the eigenvectors associated with the $L_\text{B}+L_\text{E}$ largest
eigenvalues $\lambda_1, \lambda_2, \ldots, \lambda_{L_\text{B}+L_\text{E}}$
of $\bR_{\by\by}$ (arranged
in decreasing order), whose columns span the \textit{signal subspace}
corresponding to the Bob and Eve transmissions,
i.e., the subspace
$\range(\bK)$ of $\bK \eqdef [\bK_\text{B}, \bK_\text{E}]
\in \C^{\Nr \times (L_\text{B}+L_\text{E})}$, while
$\bU_n \in \C^{\Nr \times (\Nr-L_\text{B}-L_\text{E})}$
collects the eigenvectors associated with the eigenvalue $\sigma_v^2$,
whose columns span the \textit{noise subspace}, i.e., the orthogonal
complement $\Orange(\bK)$ in $\C^{\Nr} $ of the subspace $\range(\bK)$
and, finally, $\bLambda_s \eqdef \diag(\lambda_1, \lambda_2, \ldots, \lambda_{L_\text{B}+L_\text{E}})$.
By substituting the EVD of $\bR_{\by\by}$ in
\eqref{eq:Ryy} and exploiting the orthogonality between signal and noise
subspaces, one equivalently obtains
\be
\bhbhatmmseuno  =
\bK_\text{B}^\herm \, \bU_s \, \bLambda_s^{-1}\,  \bU_s^\herm  \, \by \:.
\label{eq:MMSE-S1-sub-exact}
\ee
Since in practice the EVD is performed on $\bS_{\by\by}$ given by
\eqref{eq:stima-Ryy}, by denoting the sample matrices corresponding to
$\bU_s$ and $\bLambda_s$ with $\widehat{\bU}_s$ and $\widehat{\bLambda}_s$, respectively, one has
\be
\bhbhatmmsesub  =
\bK_\text{B}^\herm \, \widehat{\bU}_s \, \widehat{\bLambda}_s^{-1}\,
\widehat{\bU}_s^\herm  \, \by
\label{eq:MMSE-S1-sub} \:.
\ee
It is noteworthy that the estimator \eqref{eq:MMSE-S1-sub}
is not equal to \eqref{eq:MMSE-S1-smi}, since
$\bK_\text{B}^\herm \, \widehat{\bU}_n \neq \bO_{L_\text{B} \times (\Nr-L_\text{B}-L_\text{E})}$ due to the finite-sample  size effects,
where $\widehat{\bU}_n$ being the sample matrix of $\bU_n$.
This implies that
\eqref{eq:MMSE-S1-smi} and \eqref{eq:MMSE-S1-sub}
might exhibit different BMSE performances (see Subsection~\ref{sub_ex3}).
The estimator \eqref{eq:MMSE-S1-sub} basically demands the same computational burden as \eqref{eq:MMSE-S1-smi} and, additionally, requires the knowledge of the dimension
$L_\text{B}+L_\text{E}$ of the
signal subspace, which can be obtained from $\bS_{\by\by}$
by using the minimum description length criterion \cite{Wax}.

\subsection{Case 2: Imperfect knowledge of $\bK_\text{E}$}
\label{sec:MMSE-2}

Herein, our aim is to study the impact on the system performance of errors
regarding the knowledge of the transmission parameters of Eve.
Starting from an estimate of $\bR_{\bd\bd}$ or, equivalently, $\bR_{\by\by}$,
ML/MAP estimators
\cite{Book-VanTrees} or other computationally simpler
subspace-based estimation procedures \cite{Stoica-Music,Roy-Esprit} can be
used to estimate $\bK_\text{E}$.
Therefore, we assume that estimates of the AoAs and
the average transmit power of Eve are available at Alice, which are expressed
\cite{Tsai,Goldsmith} as\footnote{For the sake of analysis, we assume in Case 2
that the number $L_\text{E}$ of paths between Eve and Alice are also known.
In practice, an upper bound of $L_\text{E}$ might be available since, depending on the transmitted signal parameters (carrier frequency and bandwidth) and
application (indoor or outdoor), the maximum channel multipath
spread may be known \textit{a priori}.}
\barr
\widehat{\theta}_{\text{E},\ell} & = \theta_{\text{E},\ell} +
\Delta \theta_{\text{E},\ell} \: ,
\quad \text{for $\ell \in \{1,2, \ldots, L_\text{E}\}$}
\label{eq:dirhat}
\\
\Pshat & = \Ps \, e^{\Delta \Ps}
\label{eq:powhat}
\earr
where $\widehat{\theta}_{\text{E},\ell}$ and $\Pshat$
are estimates of
$\theta_{\text{E},\ell}$ and $\Ps$, respectively, whereas
the random variables (see Appendix~\ref{app:trunc})
\barr
\Delta \theta_{\text{E},\ell} & \sim {\cal N}_\text{T}
(0, \sigma_{\theta_{\text{E}}}, -\Delta \theta_{\text{E}, \text{max}}, \Delta \theta_{\text{E}, \text{max}})
\\
\Delta \Ps & \sim {\cal N}_\text{T}
(0, \sigma_{\Ps}, -\Delta\Psmax, \Delta\Psmax)
\earr
denote the
corresponding errors that model the \textit{uncertainty} on the knowledge
of the spoofing transmission parameters.
It is also assumed that
$\Delta\theta_{\text{E},1}, \Delta\theta_{\text{E},2}, \ldots,
\Delta\theta_{\text{E},L_\text{E}}$, and $\Delta\Ps$
are mutually independent random variables, statistically independent of
the triple $(\bhb, \bh_\text{E},\bv)$, whose probability distributions are known at Alice,
along with the noise variance $\sigma_v^2$ (see Subsection~\ref{sec:ML}
for a brief discussion about such an assumption).
As shown in Appendix~\ref{app:trunc}, the random variable
$e^{\Delta \Ps}$ in \eqref{eq:powhat} has a truncated lognormal distribution.

In this case, the optimal MMSEE minimizes
the BMSE \eqref{eq:AMSE},
where the average is taken not only over $(\bhb, \bh_\text{E},\bv)$, but
over the pdf of $(\Delta\theta_{\text{E},1}, \Delta\theta_{\text{E},2}, \ldots,
\Delta\theta_{\text{E},L_\text{E}},\Delta\Ps)$, too.
This estimator is {\em not} linear:
it is difficult to determine
in closed form and its computational complexity is prohibitive in practice.
Two mathematically tractable solutions are reported in the following
subsections by retaining the Bayesian MMSE criterion but constraining
the estimators to be {\em linear}.

\subsubsection{Naive LMMSEE}

As a first strategy to synthesize a LMMSEE
\cite{Kay} with affordable complexity, Alice can
disregard the knowledge of the statistics
of $\Delta\theta_{\text{E},1}, \Delta\theta_{\text{E},2}, \ldots, \Delta\theta_{\text{E},L_\text{E}},\Delta\Ps$ and simply build the estimate
$\widehat{\bK}_\text{E} \eqdef
\sqrt{\Pshat} \widehat{\bA}_\text{E}$
of $\bK_\text{E}$,
where $\widehat{\bA}_\text{E}$ is obtained from $\bA_\text{E}$
by replacing $\theta_{\text{E},\ell}$ with $\widehat{\theta}_{\text{E},\ell}$,
for $\ell \in \{1,2, \ldots, L_\text{E}\}$. So doing, an approximated
version of \eqref{eq:MMSE-S1} is developed as
\be
\bhbhatlmmseuno  =
\bK_\text{B}^\herm \, \widehat{\bR}_{\by\by}^{-1} \, \by
\label{eq:LMMSE-0}
\ee
with $\widehat{\bR}_{\by\by} \eqdef
\bK_\text{B} \, \bK_\text{B}^\herm  +
\widehat{\bK}_\text{E} \, \widehat{\bK}_\text{E}^\herm + \sigma_v^2 \, \bI_{\Nr}$.
In this case, the corresponding BMSE can be calculated by substituting
\eqref{eq:LMMSE-0} in \eqref{eq:AMSE} and,
by virtue of the conditional expectation rule \cite{Casella},
further averaging the obtained result
with respect to the pdf of
$(\Delta\theta_{\text{E},1}, \Delta\theta_{\text{E},2}, \ldots,
\Delta\theta_{\text{E},L_\text{E}},\Delta\Ps)$. So doing, one has
\begin{multline}
\text{BMSE}_\text{LMMSE}^{(1)} \eqdef
\Es \left\{
\trace\left[ \bK_\text{B}^\herm \,
\widehat{\bR}_{\by\by}^{-1}
\left(\bR_{\by\by} \, \widehat{\bR}_{\by\by}^{-1} - \bI_{\Nr} \right) \bK_\text{B} \right] \right\}
\\
+ \Es \left\{ \trace\left[\left(\bI_{L_\text{B}}+\bK_\text{B}^\herm \, \widehat{\bR}_{\bd\bd}^{-1} \,  \bK_\text{B}\right)^{-1}\right] \right \}
\label{BMSE-LMMSE-1}
\end{multline}
with $\widehat{\bR}_{\bd\bd} \eqdef
\widehat{\bK}_\text{E} \, \widehat{\bK}_\text{E}^\herm + \sigma_v^2 \, \bI_{\Nr}$.
We have numerically verified that the predominant cause
of BMSE degradation is represented by the first summand
in \eqref{BMSE-LMMSE-1} and, thus, replacing
$\widehat{\bR}_{\bd\bd}$
with $\bR_{\bd\bd}$ in \eqref{BMSE-LMMSE-1} has a very marginal
effect on $\text{BMSE}_\text{LMMSE}^{(1)}$.
Therefore, remembering \eqref{eq:BMMSE-1}, we get
\be
\text{BMSE}_\text{LMMSE}^{(1)} \approx
\text{BMSE}_\text{MMSE}  + \Delta \text{BMSE}_\text{LMMSE}^{(1)}
\ee
with
\be
\Delta \text{BMSE}_\text{LMMSE}^{(1)} \eqdef \Es \left\{
\trace\left[ \bK_\text{B}^\herm \,
\widehat{\bR}_{\by\by}^{-1}
\left(\bR_{\by\by} \, \widehat{\bR}_{\by\by}^{-1} - \bI_{\Nr} \right) \bK_\text{B} \right] \right\} \: .
\label{eq:DeltaBMMSEuno}
\ee
It is apparent from \eqref{eq:DeltaBMMSEuno}
that, in the low-SNR regime, i.e., when $\sigma_v^2$ is sufficiently large
compared to the maximum eigenvalue of $\bK_\text{B} \, \bK_\text{B}^\herm$,
$\bK_\text{E} \, \bK_\text{E}^\herm$,  and
$\widehat{\bK}_\text{E} \, \widehat{\bK}_\text{E}^\herm$,
one has $\widehat{\bR}_{\by\by} \approx \bR_{\by\by} \approx \sigma_v^2 \, \bI_{\Nr}$
and, thus, $\Delta \text{BMSE}_\text{LMMSE}^{(1)} \approx 0$.
On the other hand, for high SNR values, i.e., when $\sigma_v^2$ is sufficiently small
compared to the minimum eigenvalue of $\bK_\text{B} \, \bK_\text{B}^\herm$,
$\bK_\text{E} \, \bK_\text{E}^\herm$, and
$\widehat{\bK}_\text{E} \, \widehat{\bK}_\text{E}^\herm$, it results
that  $\bR_{\by\by} \neq \widehat{\bR}_{\by\by}$,
which implies that $\Delta \text{BMSE}_\text{LMMSE}^{(1)}$
might be nonzero.
In summary, errors regarding the knowledge of the Eve's parameters
may be deleterious at the high-SNR regime, whereas they
are nearly irrelevant for low SNR values.

\subsubsection{Improved LMMSEE}

An alternative design can be pursued by additionally making use of the statistics
of $\Delta\theta_{\text{E},1}, \Delta\theta_{\text{E},2}, \ldots, \Delta\theta_{\text{E},L_\text{E}},\Delta\Ps$. In this case, the structure of the LMMSEE is derived in Appendix~\ref{app:lmmsee}
and it reads as shown in \eqref{eq:LMMSE} at the top of the next page,
\begin{figure*}[!t]
\normalsize
\be
\bhbhatlmmsedue = \bK_\text{B}^\herm \, \left\{\bK_\text{B} \, \bK_\text{B}^\herm +
\Pshat \, \Es\left[e^{-\Delta \Ps}\right]
\frac{1}{L_\text{E}} \sum_{\ell=1}^{L_\text{E}}
\bR_{\ba\ba}^{(\ell)}
+ \sigma_v^2 \, \bI_{\Nr} \right\}^{-1}  \by
\label{eq:LMMSE}
\ee
\hrulefill
\end{figure*}
with
\begin{multline}
\Es\left[e^{-\Delta \Ps}\right] =
\frac{e^{\frac{\sigma_{\Ps}^2}{2}}}{2 \, \text{erf}\left(\frac{\Delta\Psmax}{\sqrt{2} \, \sigma_{\Ps}}\right)} \left[ \text{erf}\left(\frac{\Delta\Psmax-\sigma_{\Ps}^2}{\sqrt{2} \, \sigma_{\Ps}}\right)
\right. \\ \left.
+ \text{erf}\left(\frac{\Delta\Psmax+\sigma_{\Ps}^2}{\sqrt{2} \, \sigma_{\Ps}}\right) \right]
\label{eq:meandeltaPE}
\end{multline}
whereas the $(n_1+1,n_2+1)$th entry of $\bR_{\ba\ba}^{(\ell)}$ is reported
in \eqref{eq:Raa} at the top of the next page,
\begin{figure*}[!t]
\normalsize
\begin{multline}
\left\{\bR_{\ba\ba}^{(\ell)}\right \}_{n_1+1,n_2+1}= \frac{1}{N}
\left \{ 1- \left[4 \pi^2 (n_1-n_2)^2 \Delta^2
\sin^2(\widehat{\theta}_{\text{E},\ell}) - j \, 2 \pi (n_1-n_2)  \Delta
\cos(\widehat{\theta}_{\text{E},\ell}) \right]
\sigma_{\theta_{\text{E}}}^2
\right. \\ \left. \cdot \left[ \frac{1}{2}-
\frac{\Delta \theta_{\text{E}, \text{max}}}{\text{erf}\left(\frac{\Delta \theta_{\text{E}, \text{max}}}{\sqrt{2} \, \sigma_{\theta_{\text{E}}}}\right) \sqrt{2 \pi} \, \sigma_{\theta_{\text{E}}}}
\, e^{- \frac{\Delta \theta_{\text{E}, \text{max}}^2}{2 \sigma_{\theta_{\text{E}}}^2}}
 \right]
\right \} e^{-j [2 \pi (n_1-n_2) \Delta
\cos(\widehat{\theta}_{\text{E},\ell})]}
\label{eq:Raa}
\end{multline}
\hrulefill
\end{figure*}
for $n_1,n_2 \in \{0,1,\ldots, N-1\}$.

The synthesis of \eqref{eq:LMMSE}
essentially requires the same complexity of \eqref{eq:MMSE-S1}
and \eqref{eq:LMMSE-0}.
The estimator \eqref{eq:LMMSE} exploits
the prior information on the
estimation error of the Eve's parameters to
outperform the naive LMMSE,
especially for moderate-to-high SNR values.
Its performance will be numerically evaluated in
the forthcoming Section~\ref{sec:simulation}.

\begin{table}[t]
\caption{System knowledge and computational complexity of the
considered channel estimators
\label{tab: Computational complexity}}
\centering{}%
\begin{tabular}{c||c|c}
\hline
\noalign{\vskip\doublerulesep}
\textbf{Estimator} & \textbf{System Knowledge } & \textbf{Complexity (flops)}\tabularnewline[\doublerulesep]
\hline
\noalign{\vskip\doublerulesep}
\hline
\noalign{\vskip\doublerulesep}
LSE \eqref{eq:LSE} &
$\bK_\text{B}$ (full-column rank)
&
$\mathcal{O}(L_\text{B}^3)$
\tabularnewline[\doublerulesep]
\hline
\noalign{\vskip\doublerulesep}
MLE \eqref{eq:MLE-sol}
&
$\bK_\text{B}$ (full-column rank), $\bR_{\bd\bd}$
&
$\mathcal{O}(\Nr^3)$
\tabularnewline[\doublerulesep]
\hline
\noalign{\vskip\doublerulesep}
MMSEE \eqref{eq:MMSE-S1}
&
$\bK_\text{B}$, $\bR_{\by\by}$
&
$\mathcal{O}(\Nr^3)$
\tabularnewline[\doublerulesep]
\hline
\noalign{\vskip\doublerulesep}
LMMSEE \eqref{eq:LMMSE-0}
&
$\bK_\text{B}$,
$\{\widehat{\theta}_{\text{E},\ell}\}_{\ell=1}^{L_\text{E}}$, $\Pshat$, $\sigma_v^2$
&
$\mathcal{O}(\Nr^3)$
\tabularnewline[\doublerulesep]
\hline
\noalign{\vskip\doublerulesep}
LMMSEE \eqref{eq:LMMSE}
&
$\bK_\text{B}$,
$\{\widehat{\theta}_{\text{E},\ell}\}_{\ell=1}^{L_\text{E}}$, $\Pshat$, $\sigma_v^2$,
\\ &
pdf of
$\{\Delta\theta_{\text{E},\ell}\}_{\ell=1}^{L_\text{E}}$ and $\Delta\Ps$
&
$\mathcal{O}(\Nr^3)$
\tabularnewline[\doublerulesep]
\hline
\hline
\end{tabular}
\end{table}

\section{Numerical results}
\label{sec:simulation}

Tab.~\ref{tab: Computational complexity} reports
the system information required for calculating
the considered estimators and the corresponding
computation complexity.
The performance analysis of such estimators was developed by
resorting to Monte Carlo simulations in order to
corroborate our theoretical findings  as well. To this aim, we considered
the following simulation setting.
The number of antennas at Alice is set equal to $\Nr=10$,
with an absolute antenna spacing $d=\lambda_\text{c}/2$.
With reference to the multi-user scenario depicted in Fig.~\ref{fig:fig_1},
we considered two Bob-Eve pairs, i.e., $M=2$.
The number of Bob-to-Alice and Eve-to-Alice paths  was chosen equal to
$L_\text{B,1}=L_{\text{E},1}=3$ for the first Bob-Eve pair, whereas
$L_\text{B,2}=L_{\text{E},2}=2$  for the second one.
The AoAs of $\text{Bob} \,\,1$ and $\text{Bob} \,\, 2$ were fixed as follows:
$\theta_{\text{B},1,1}=0$,
$\theta_{\text{B},2,1}=\psi$,
$\theta_{\text{B},3,1}=\pi/5$,
$\theta_{\text{B},1,2}=(3/5) \pi$, and
$\theta_{\text{B},2,2}=(7/10) \pi$,
respectively, with the parameter
$\psi \in [0, \pi/10]$.
It should be observed that,
when $\psi \to 0$, the matrix $\bK_{\text{B},1}$
tends to lose its full column rank property.
On the other hand, the AoAs of $\text{Eve} \,\,1$ and $\text{Eve} \,\, 2$
were chosen as:
$\theta_{\text{E},1,1}=\pi/5 + \phi$,
$\theta_{\text{E},2,1}=(2/5) \pi$,
$\theta_{\text{E},3,1}=\pi/2$,
$\theta_{\text{E},1,2}=(4/5) \pi$,  and
$\theta_{\text{E},2,2}=(9/10) \pi$, respectively,
with the parameter
$\phi \in [0, \pi/10]$. It is noteworthy that, when
$\phi \to 0$, the columns of
$\bA_{\text{B},1}$ and $\bA_{\text{E},1}$
become linearly dependent.
The pilot vectors $\bp_1$ and $\bp_2$ were obtained
by picking two different columns of an unitary
$K$-point discrete Fourier transform matrix, with $K=8$.
Unless otherwise specified, we set  $\psi=\phi=\pi/10$,
$\text{SNR}_\text{B} \eqdef \EuScript{P}_{\text{B},1}/\sigma_v^2=
\EuScript{P}_{\text{B},2}/\sigma_v^2=30$ dB, $\text{SSR} \eqdef
\EuScript{P}_{\text{B},1}/\EuScript{P}_{\text{E},1}=
\EuScript{P}_{\text{B},2}/\EuScript{P}_{\text{E},2}
= 0$ dB,
$\sigma_{\theta_{\text{E}}}=\Delta \theta_{\text{E}, \text{max}}/3$,
$\Delta \theta_{\text{E}, \text{max}}=\pi/25$
(corresponding to a interval of uncertainty
$2 \, \Delta \theta_{\text{E}, \text{max}}$ of $0.08 \pi$ rad),
$\sigma_{\Ps}=\Delta\Psmax/2$, $\Delta\Psmax=0.3454$
(corresponding to a interval of uncertainty
$2 \, \Delta\Psmax$ of $3$ dB),
and
we implemented the MLE and MMSEE \eqref{eq:MMSE-S1}
by using the exact
expression of $\bR_{\bd\bd}$ and $\bR_{\by\by}$.

Two performance metrics were used to evaluate the channel estimation
performance of the first legitimate user. The former is a \textit{normalized}
version of the BMSE
defined in \eqref{eq:AMSE}:
\be
\text{NBMSE}(\widehat{\bh}_{\text{B},1}) \eqdef \Es \left[ \frac{\left \|
\widehat{\bh}_{\text{B},1} -\bh_{\text{B},1}
\right\|^2}{\|\bh_{\text{B},1}\|^2} \right] \: .
\ee
The latter is the achievable \textit{(ergodic) secrecy rate} \cite{Leung.1978}
of the downlink transmission from Alice to Bob $1$:
\be
\Cap_\text{s} \eqdef \max(\Cap_{\text{B},1}-\Cap_{\text{E},1}, 0)
\ee
where, for $\text{RX} \in \{\text{B}, \text{E}\}$,
\be
\Cap_{\text{RX},1}  =\Es \left[ \log_2 \left( 1+ \text{SINR}_{\text{RX},1} \right) \right]
\ee
is the maximum achievable \textit{normalized}\footnote{The normalization
by the factor $T_\text{d}/T_\text{c}$ was introduced for convenience, where
we remember that
$T_\text{d}$ is the length of the downlink
data session and $T_\text{c}$ is the channel coherence time.
}
(ergodic) spectral efficiency
(in $\text{bits}/\text{s}/\text{Hz}$) of the
Alice-to-RX downlink channel,
\be
\text{SINR}_{\text{RX},1} \eqdef \frac{\text{SNR}_\text{DL}
\, \left |\bh_{\text{RX},1}^\trasp \, \bA_{\text{RX},1}^\trasp \, \bw_1 \right|^2}
{\text{SNR}_\text{DL} \displaystyle \sum_{m=2}^{M}
\left |\bh_{\text{RX},1}^\trasp \, \bA_{\text{RX},1}^\trasp \, \bw_m \right|^2 + 1}
\ee
denotes the corresponding signal-to-interference-plus-noise ratio (SINR)
under the assumption that the Alice transmits independent and identically
distributed zero-mean unit-variance symbols,
with $\text{SNR}_\text{DL}$ representing the SNR (assumed to be independent of $m$) and
$\bW \eqdef [\bw_1, \bw_2, \ldots, \bw_M] \in \C^{N \times M}$ being the 
precoding matrix at Alice.
We considered the unit-norm matched-filter precoder \cite{Joham.2005}, which is given by $\bW =\widehat{\bH}_{\text{B}}^*/\|\widehat{\bH}_{\text{B}}\|$, with
$\widehat{\bH}_{\text{B}} \eqdef [\bA_{\text{B},1} \, \widehat{\bh}_{\text{B},1},
\bA_{\text{B},2} \, \widehat{\bh}_{\text{B},2}, \ldots,
\bA_{\text{B},M} \, \widehat{\bh}_{\text{B},M}]$.
As a reference, we also reported the performance of
the estimators \eqref{eq:LSE}, \eqref{eq:MLE-sol}, and \eqref{eq:MMSE-S1}
when the Eves do not attack the pilot
session of the legitimate users, i.e., $\Psm=0$, $\forall m \in \mathcal{M}$,
and, thus, they steal information in downlink only, referred to as ``passive Eves".
In this respect, it should be observed that $\bR_{\bd\bd} =\sigma_v^2 \, \bI_{\Nr}$
and, thus, the MLE \eqref{eq:MLE-sol} ends up to the LSE \eqref{eq:LSE}.
Finally, all the results are obtained by carrying out
$10^5$ independent Monte Carlo trials, with each run using
different sets of channel coefficients and noise.

\subsection{Example~1 : Performance as a function of $\text{SNR}_\text{B}$}

In this subsection, we reported the performance of the consi\-dered channel
estimators as a function of $\text{SNR}_\text{B}$,
ranging from $0$ to $34$ dB.
Results of Fig.~\ref{fig:fig_3} confirm that
the LSE is unable to counteract the pilot spoofing attack, by showing
the BMSE floor predicted by Lemma~\ref{lem:1}. On the other hand,
according to Lemmas~\ref{lem:2} and \ref{lem:3},
the MLE and MMSEE are able to suppress the pilot spoofing signal in the
high-$\text{SNR}_\text{B}$ region, by exhibiting almost the same performance
of the corresponding estimators in the absence of the pilot spoofing attack.
The performance of the
naive LMMSE gets worse for increasing values of $\text{SNR}_\text{B}$, due
to the presence of the summand
$\Delta \text{BMSE}_\text{LMMSE}^{(1)}$ in
\eqref{eq:DeltaBMMSEuno}.
Such a negative effect is compensated for by exploiting the knowledge of
the estimation error of the Eves' parameters, as testified by the satisfactory
asymptotic (i.e., for $\text{SNR}_\text{B} \to + \infty$) performance of the improved LMMSE.

\vspace{-3mm}
\subsection{Example~2: Performance as a function of $\text{SSR}$}

We depicted in Fig.~\ref{fig:fig_4} the performance of the considered channel
estimators as a function of the SSR,
ranging from -$10$ to $20$ dB. Besides confirming that, for high SNR values,
the performance of the MLE and MMSEE is almost unaffected by the pilot
spoofing attack, independently of the value of the SSR, it is apparent that the
channel estimation accuracy of the LSE becomes acceptable only for high SSR values.
It is also interesting to note that the performance of the naive LMMSE
rapidly worsens as the SSR decreases, while the improved LMMSE exhibits
a spoofing-resistant capability for a wider range of SSR values.

\subsection{Example~3: Performance of the MMSEE
with sample correlation matrix $\bS_{\by\by}$}
\label{sub_ex3}

To show how much training is needed to reliably estimate
$\bR_{\by\by}$ according to the training protocol reported
in Fig.~\ref{fig:fig_2}, we plotted in Fig.~\ref{fig:fig_5} the performance
of the MMSEE as a function of $Q$.
As expected, the SMI implementation \eqref{eq:MMSE-S1-smi} requires
a huge number $Q$ of channel coherence intervals for achieving satisfactory
performance, whereas its subspace counterpart
\eqref{eq:MMSE-S1-sub} converges to the ideal BMSE
\eqref{eq:BMMSE-1} much more quickly, by ensuring
a reduction of the length of the time window $\mathcal{W}$ of about one order of magnitude. Similar conclusions apply to the MLE as well, when it is implemented
starting from $\bS_{\bd\bd}$.

\subsection{Example~4: Performance as a function of $\psi$ and $\phi$}

We depicted in Fig.~\ref{fig:fig_8} the performance of the considered channel
estimators as a function of $\psi$.
When $\psi \to 0$, the condition $\rank(\bK_\text{B})=L_\text{B}$ tends to be violated.
The non-fulfillment of  such a rank condition
does not prevent the MLE, MMSEE, and LMMSEEs to satisfactorily estimate
the legitimate channel, although perfect cancellation of the pilot spoofing
signal at high SNR values is not ensured anymore.

We also studied the impact of $\phi$
on the performance of the considered channel
estimators. Results of Fig.~\ref{fig:fig_9} show that
the MLE, MMSEE, and LMMSEEs exhibit a certain robustness
when the legitimate and the spoofing transmissions tend to have
a common AoA, i.e., when  $\phi \to 0$, and, thus,
the columns of
$\bA_\text{B}$ and $\bA_\text{E}$ are no longer
linearly independent.

\subsection{Example~5: Performance as a function of $\sigma_{\theta_{\text{E}}}$}

Finally, we investigated the performance of the LMMSEEs derived in
Subsection~\ref{sec:MMSE-2} as a function of the
standard deviation $\sigma_{\theta_{\text{E}}}$,
with $\Delta \theta_{\text{E}, \text{max}}=\pi/(25)$.\footnote{Results -- not
reported here for the sake of brevity -- show that the performance of
the LMMSEEs are weakly affected by the error on the
estimate of the Eve's average transmission power $\Ps$.}
It can be seen from Fig.~\ref{fig:fig_10} that the performance
of the LMMSEEs gracefully degrades as
the uncertainty on the knowledge
of the spoofing AoAs increases, by enlightening that even
an imperfect knowledge of the spoofing transmission parameters
can lead to a significant performance gain with respect to the simpler LSE.

\begin{figure*}[!t]
\begin{minipage}[b]{9cm}
\centering
\includegraphics[width=1\linewidth]{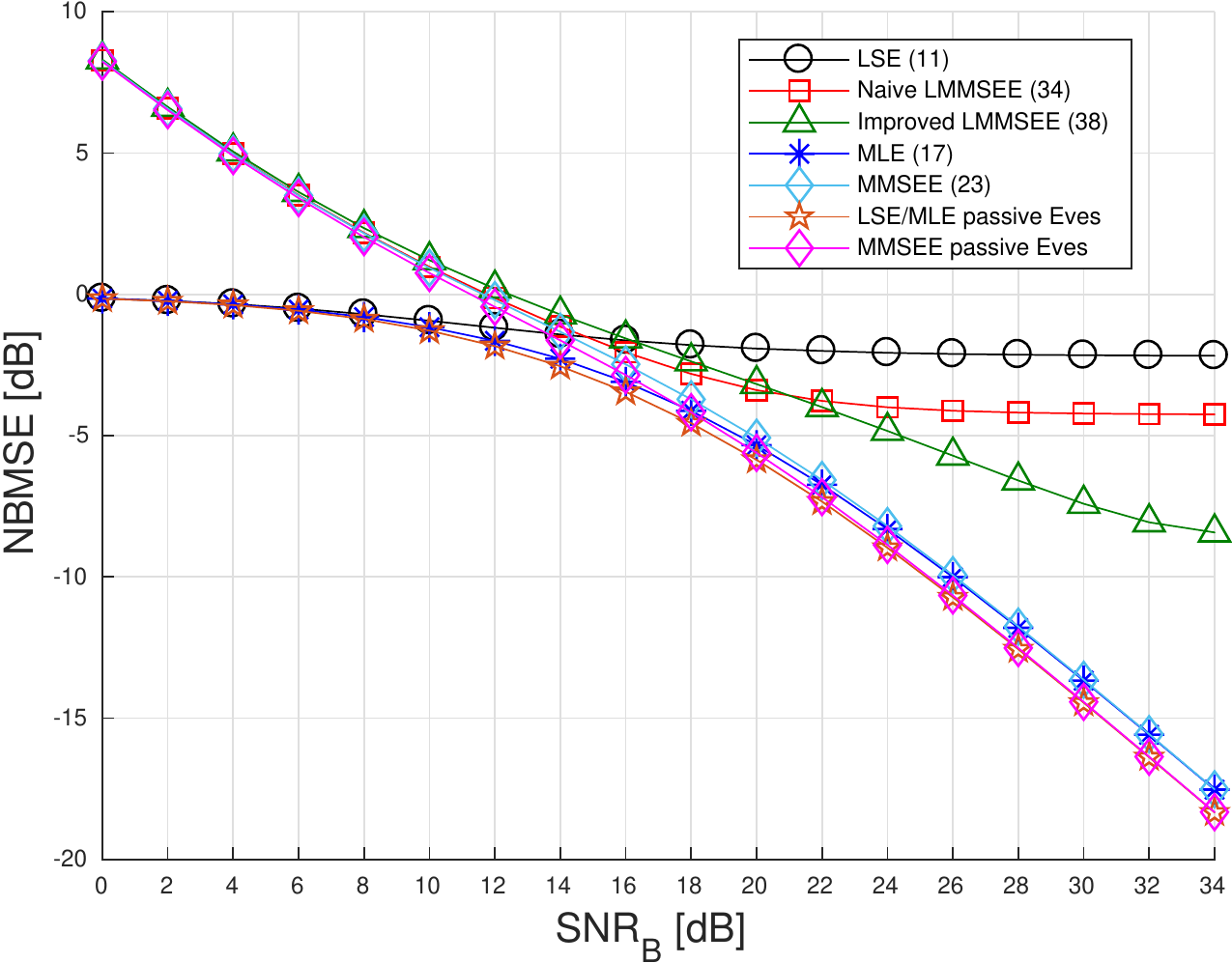}
\end{minipage}
\begin{minipage}[b]{9cm}
\centering
\includegraphics[width=1\linewidth]{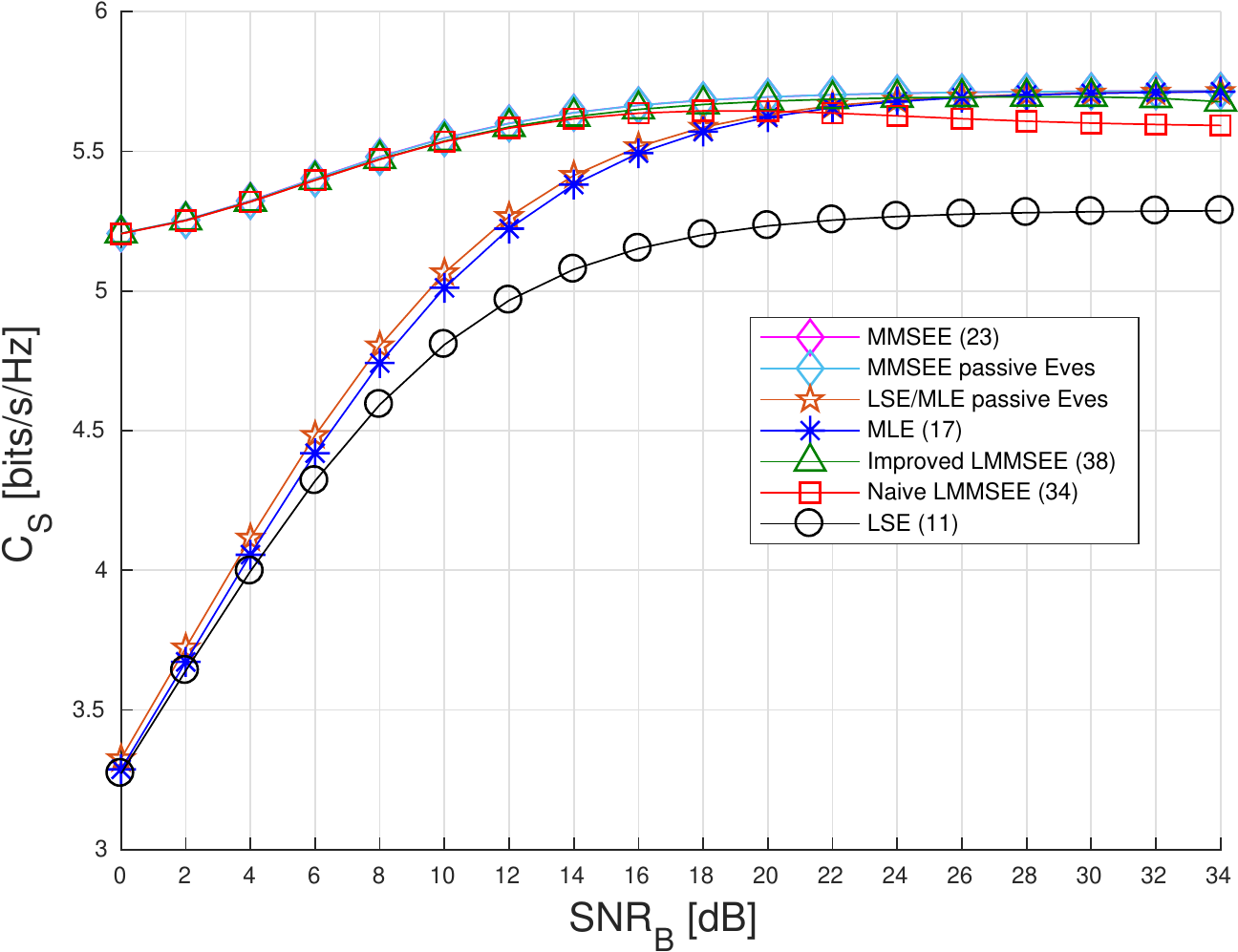}
\end{minipage}
\caption{NBMSE (left) and secrecy rate (right) versus $\text{SNR}_\text{B}$ (Example~1).}
\label{fig:fig_3}
\end{figure*}

\begin{figure*}[!t]
\begin{minipage}[b]{9cm}
\centering
\includegraphics[width=1\linewidth]{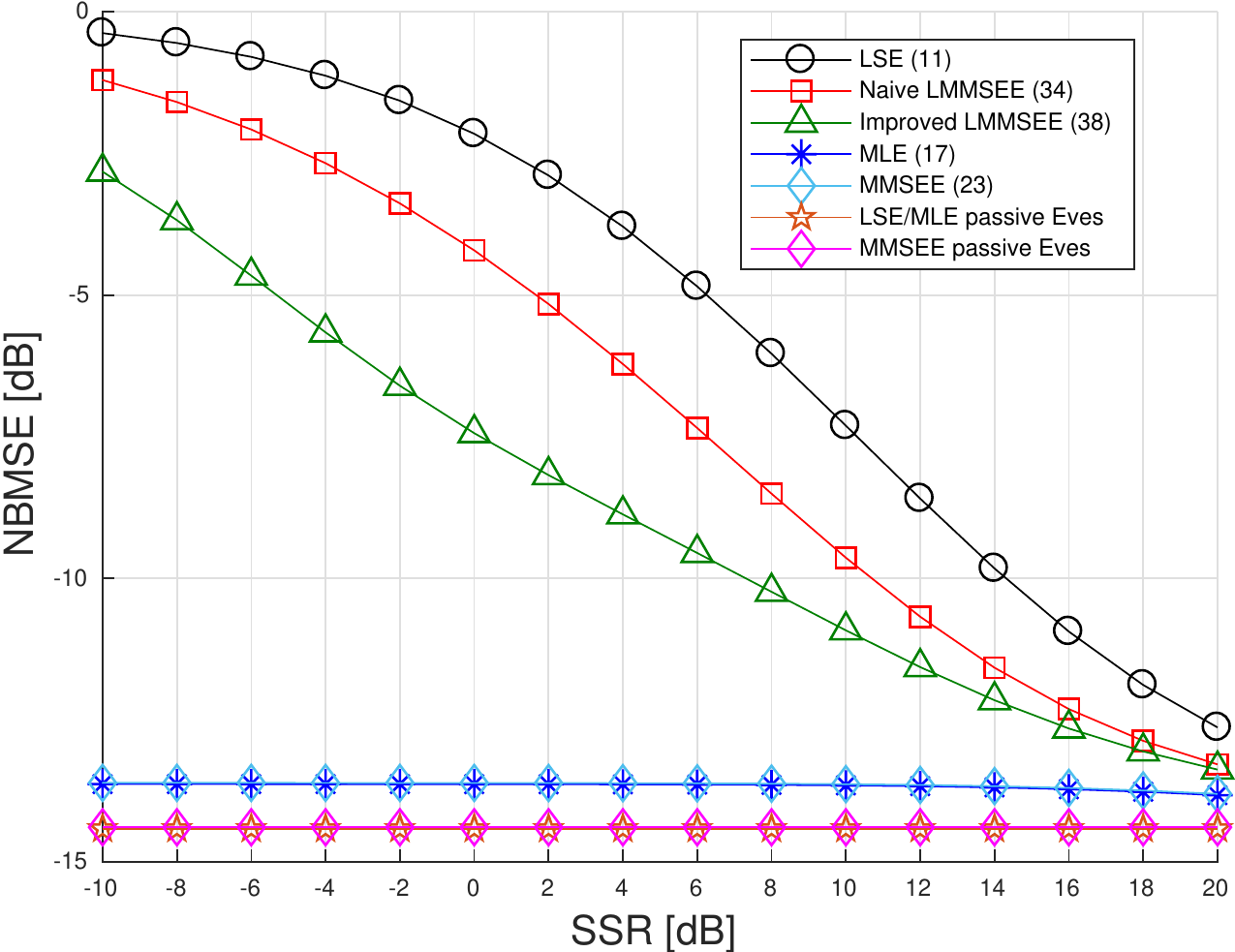}
\end{minipage}
\begin{minipage}[b]{9cm}
\centering
\includegraphics[width=1\linewidth]{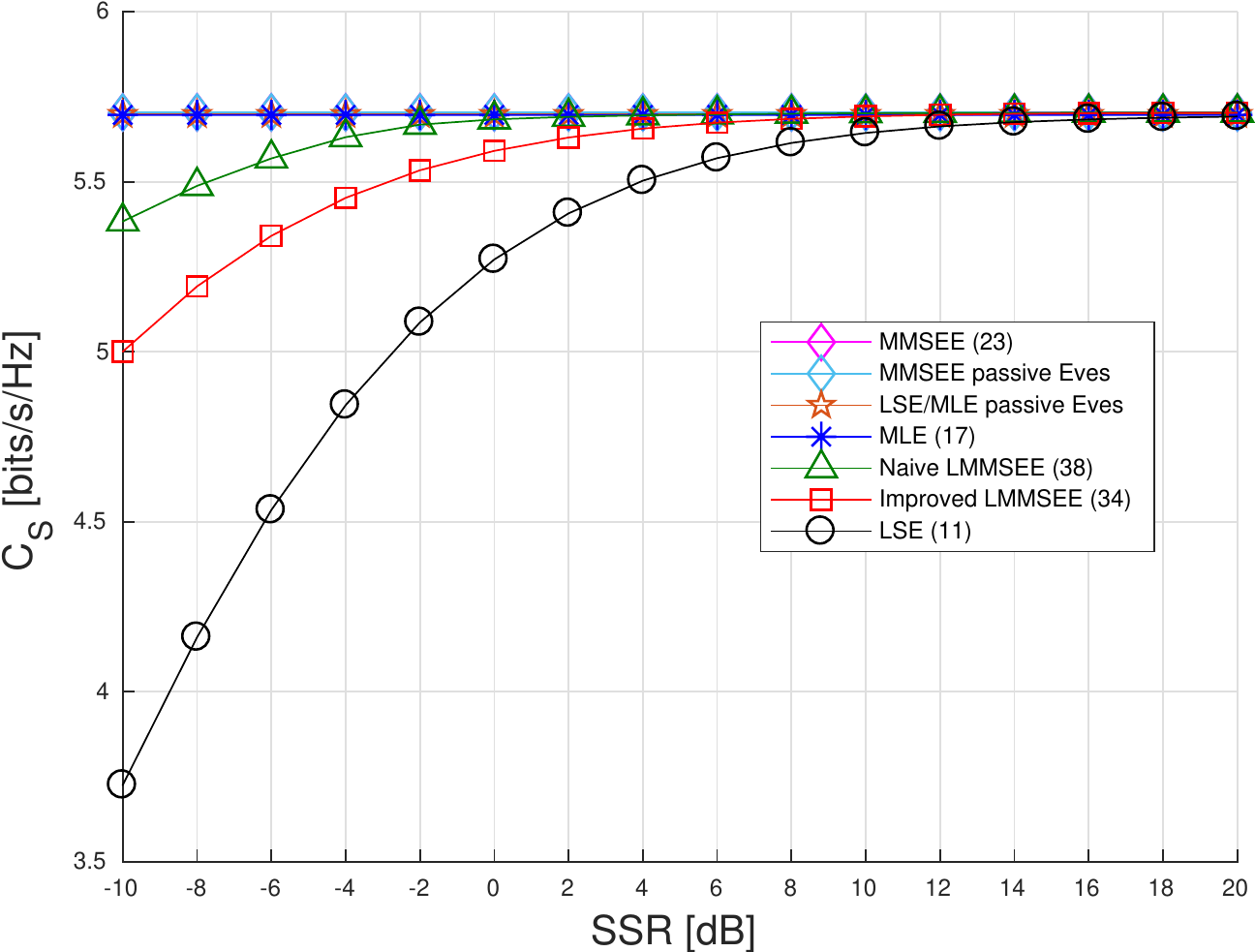}
\end{minipage}
\caption{NBMSE (left) and secrecy rate (right) versus $\text{SSR}$ (Example~2).}
\label{fig:fig_4}
\end{figure*}

\begin{figure}
\centering
\includegraphics[width=1\columnwidth]{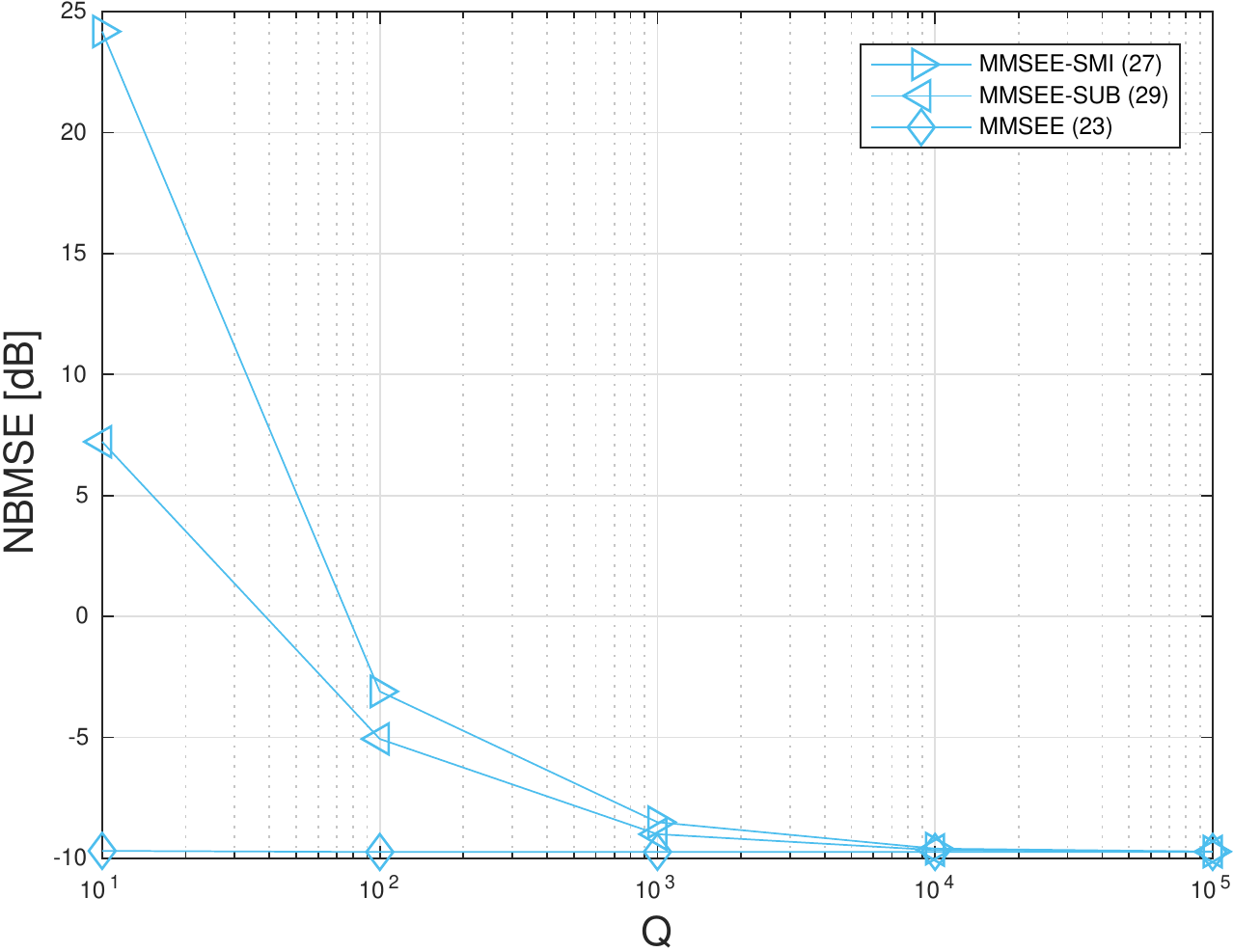}
\caption{NBMSE versus $Q$ (Example~3).}
\label{fig:fig_5}
\end{figure}

\begin{figure*}[!t]
\begin{minipage}[b]{9cm}
\centering
\includegraphics[width=1\linewidth]{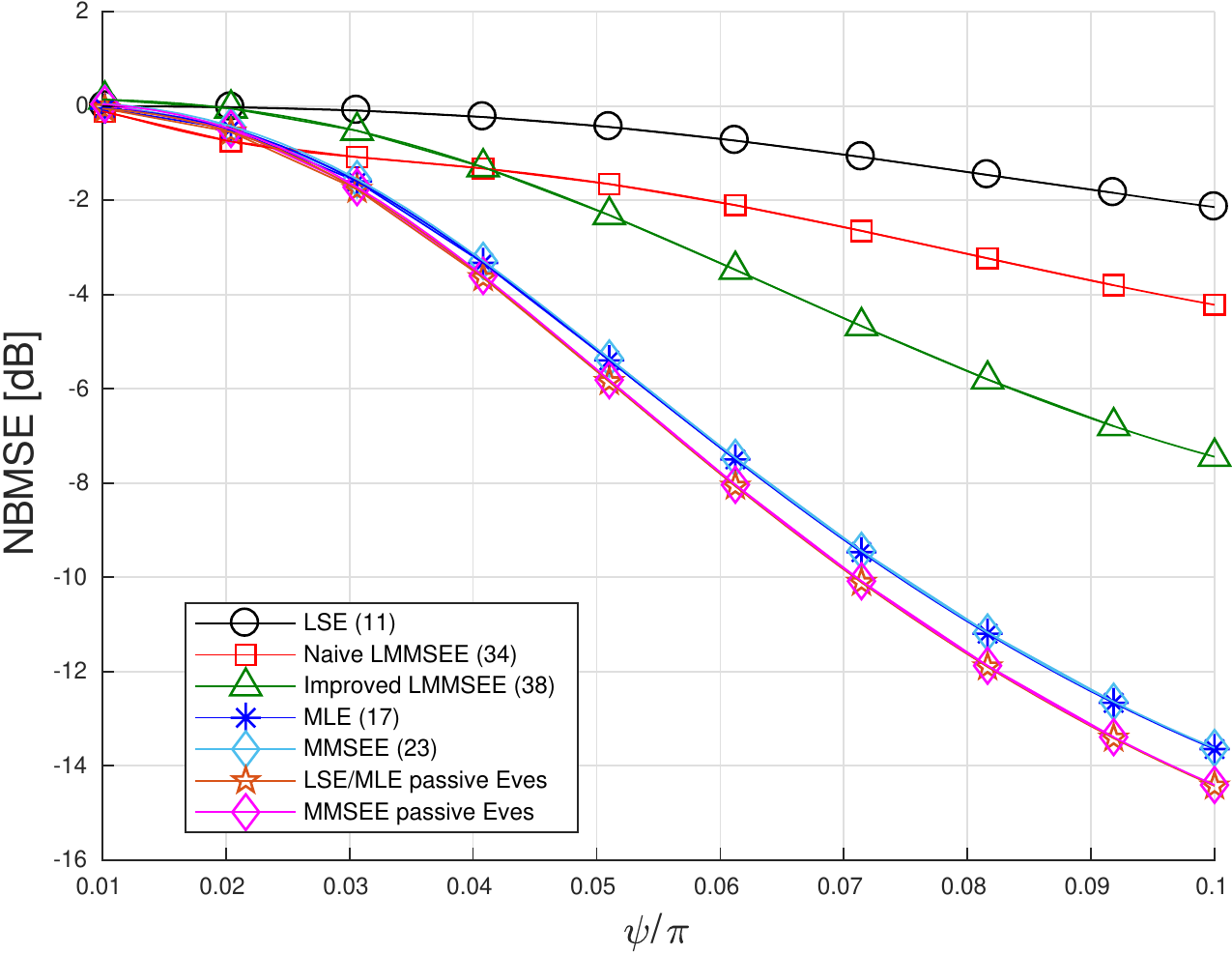}
\end{minipage}
\begin{minipage}[b]{9cm}
\centering
\includegraphics[width=1\linewidth]{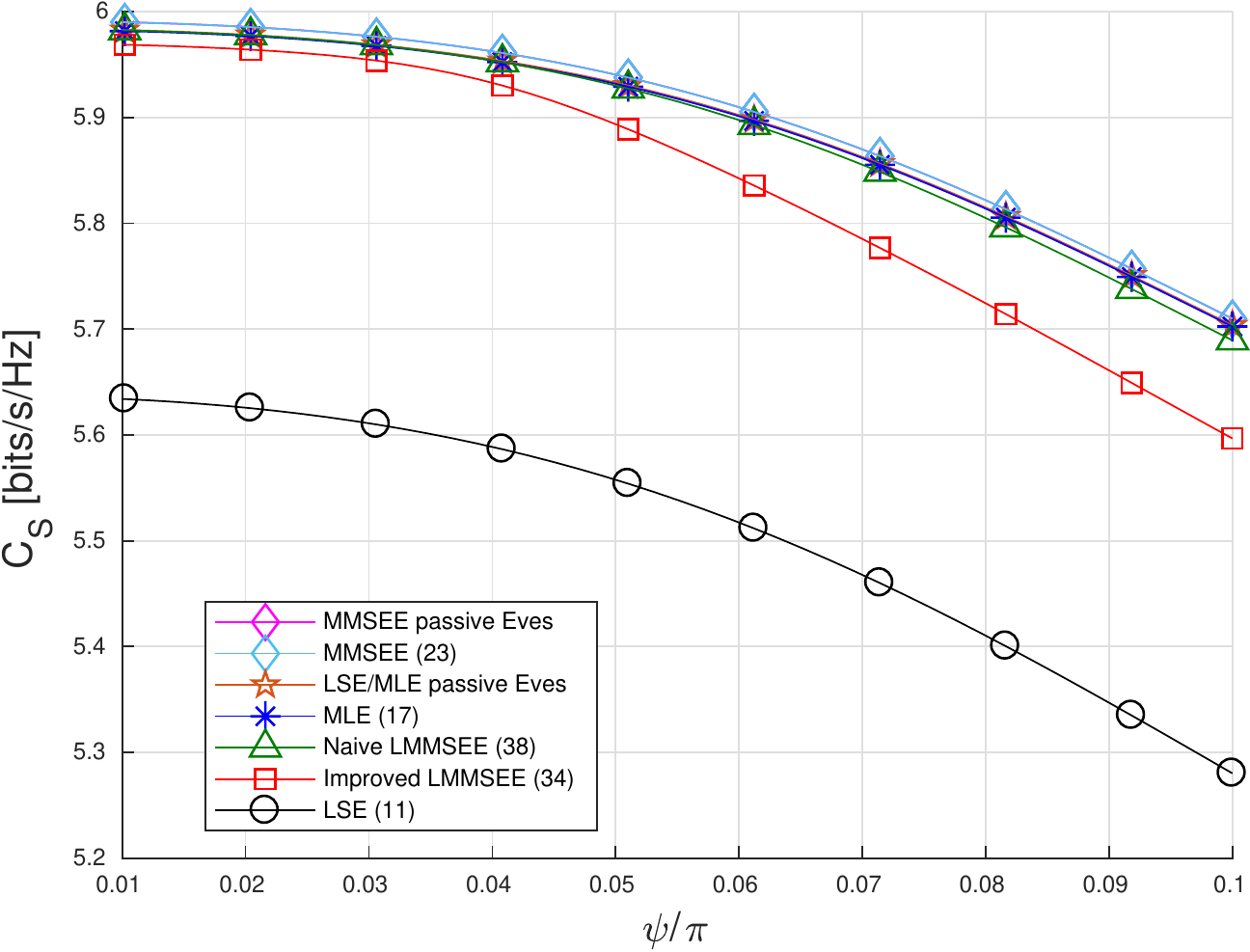}
\end{minipage}
\caption{NBMSE (left) and secrecy rate (right) versus $\psi$ (Example~4).}
\label{fig:fig_8}
\end{figure*}
\begin{figure*}[!t]
\begin{minipage}[b]{9cm}
\centering
\includegraphics[width=1\linewidth]{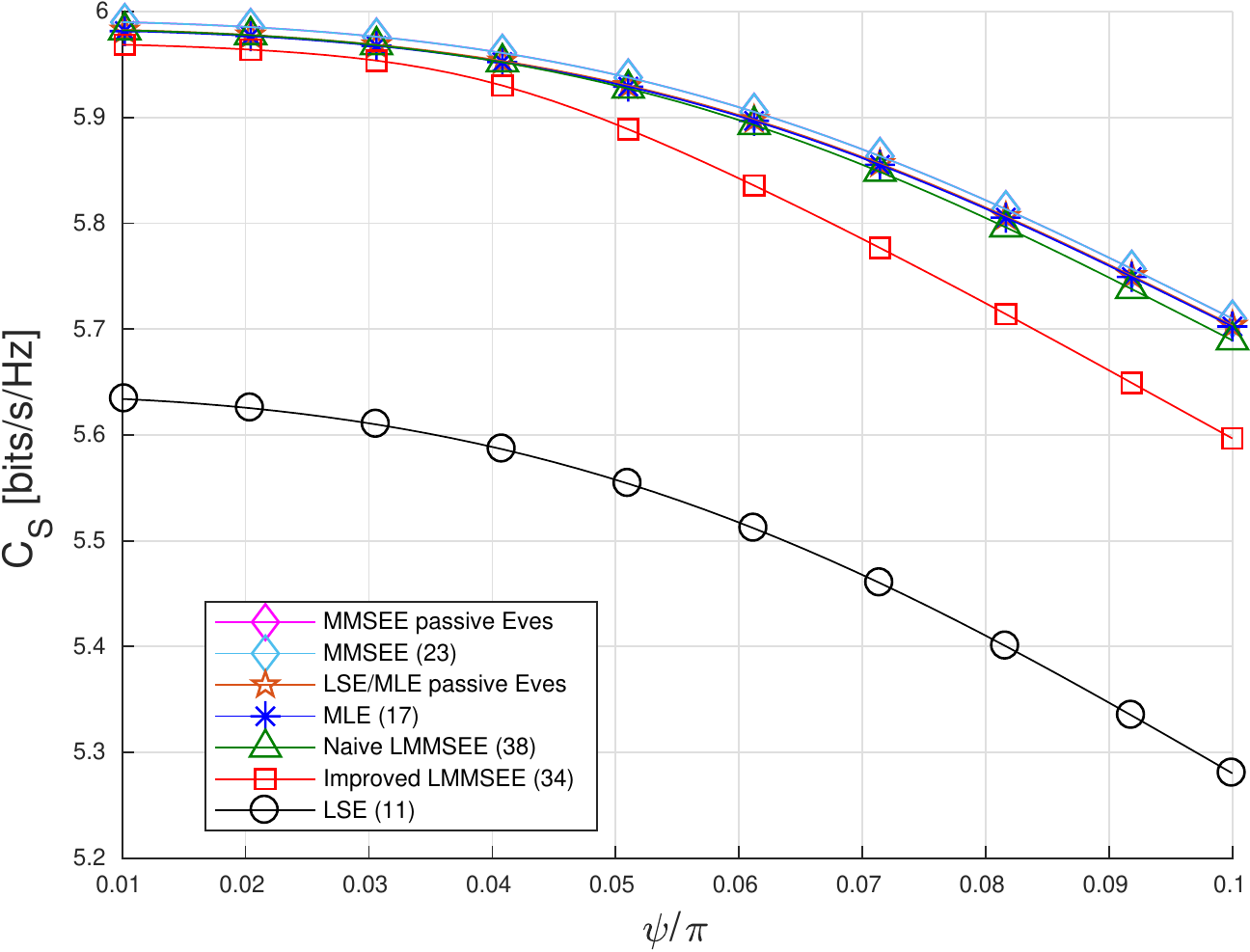}
\end{minipage}
\begin{minipage}[b]{9cm}
\centering
\includegraphics[width=1\linewidth]{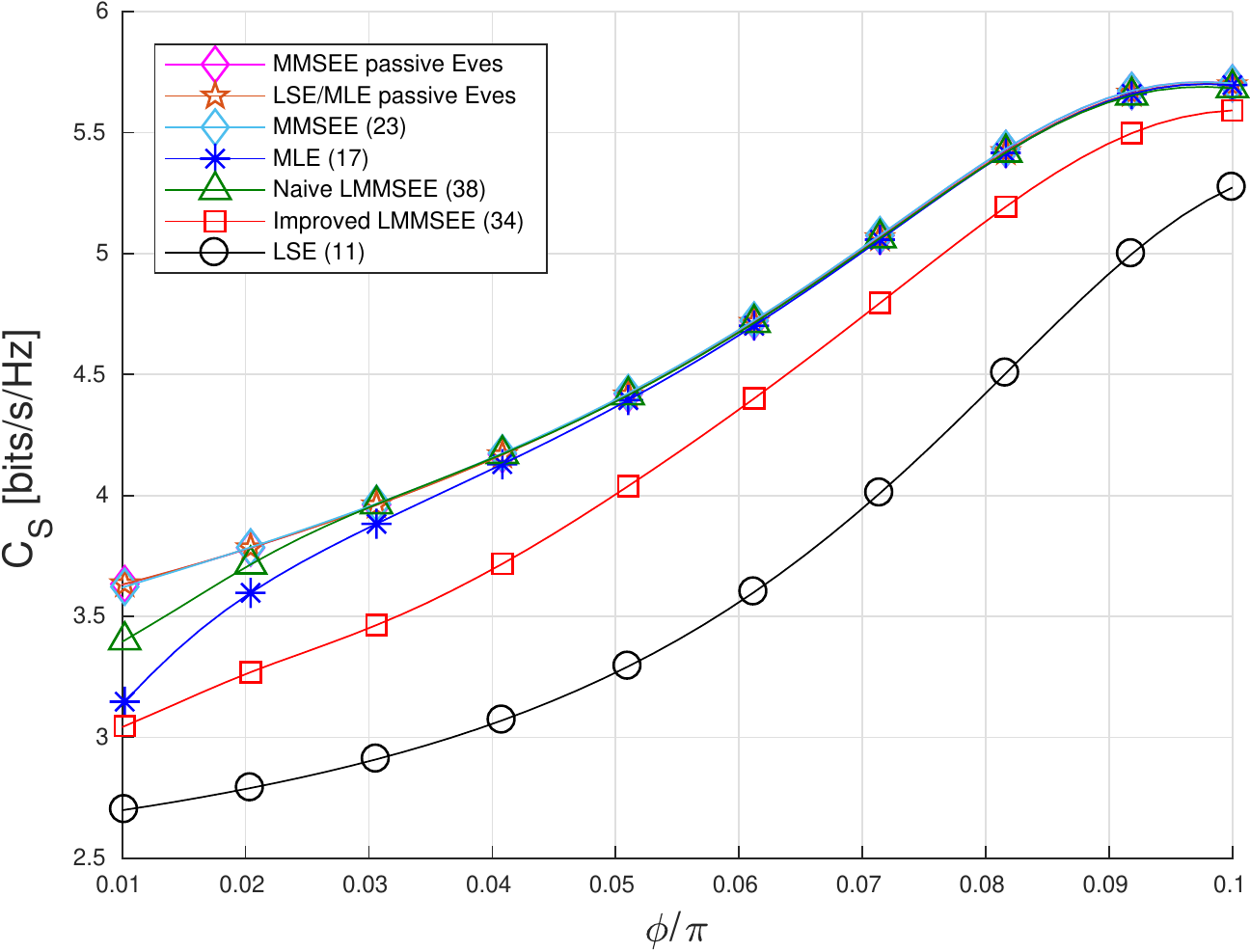}
\end{minipage}
\caption{NBMSE (left) and secrecy rate (right) versus $\phi$ (Example~4).}
\label{fig:fig_9}
\end{figure*}

\begin{figure*}[!t]
\begin{minipage}[b]{9cm}
\centering
\includegraphics[width=1\linewidth]{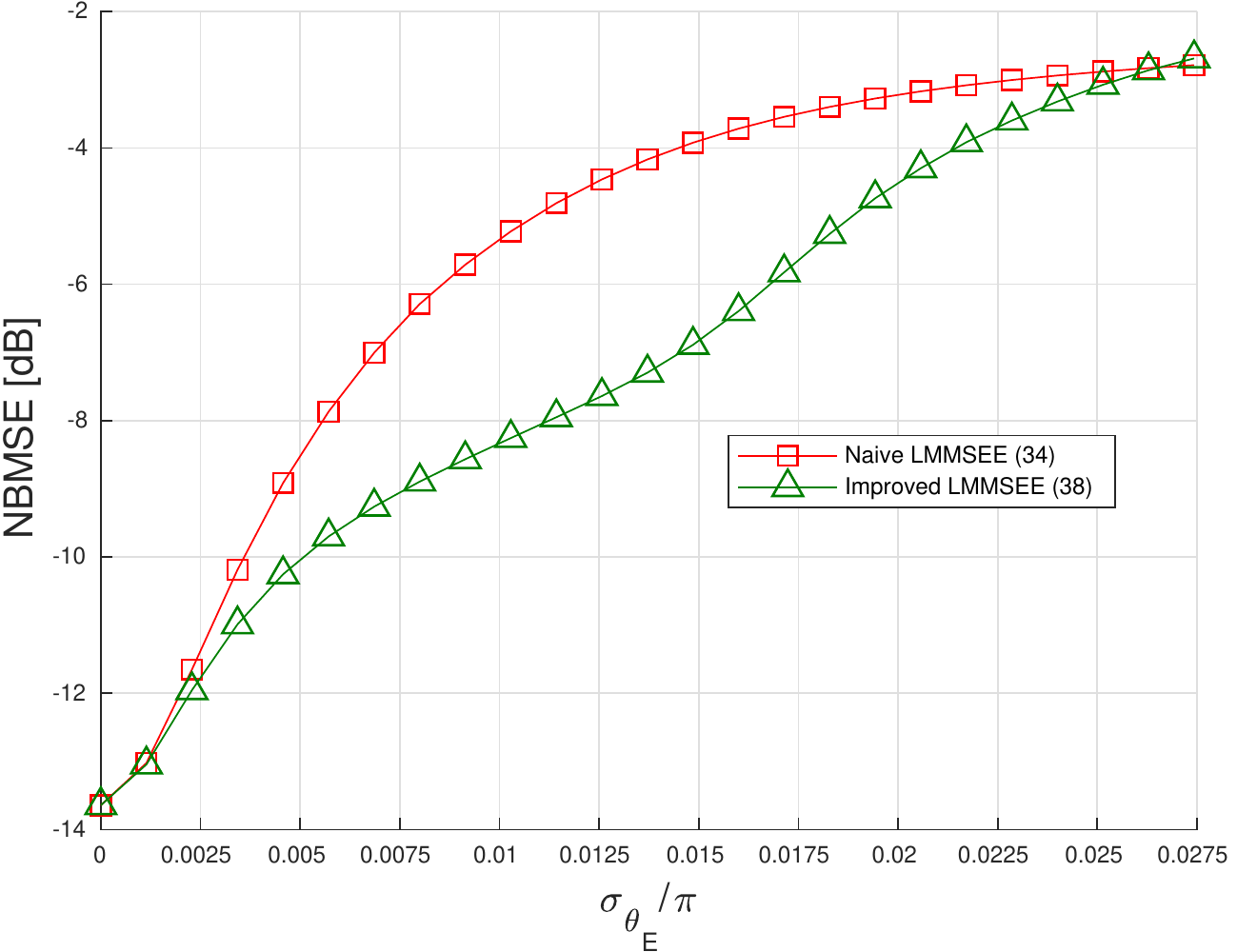}
\end{minipage}
\begin{minipage}[b]{9cm}
\centering
\includegraphics[width=1\linewidth]{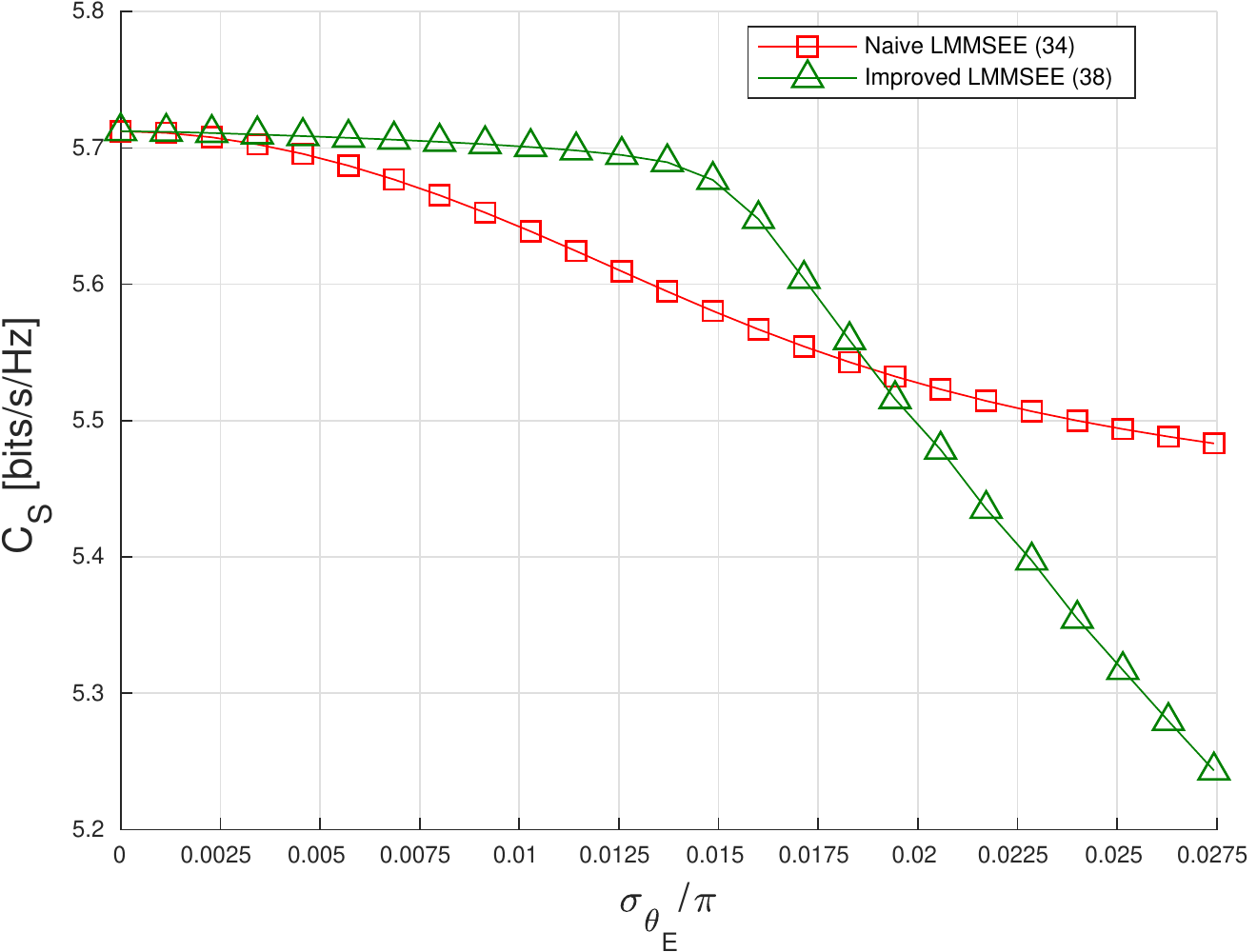}
\end{minipage}
\caption{NBMSE (left) and secrecy rate (right) versus $\sigma_{\theta_{\text{E}}}$ (Example~5).}
\label{fig:fig_10}
\end{figure*}

\section{Conclusions and directions for future work}
\label{sec:concl}

Five uplink channel estimation schemes for multiple antennas systems
have been developed and studied in the case of a pilot spoofing attack, namely,
LSE, MLE, MMSEE, naive and improved LMMSEEs.
The LSE does not require knowledge of 
the statistics of the legitimate channel 
or the correlation matrix
of the spoofing-plus-noise signal and, hence, it
does not have any
spoofing suppression capability.
On the other hand, compared to the LSE, the MLE has better accuracy
as it involves the correlation matrix of the spoofing-plus-noise signal.
At low SNR, a performance gain over the MLE is ensured by the MMSEE, 
since it additionally incorporates the statistics of the legitimate channel.
The naive LMMSEE can be designed if an estimate
of the main spoofing parameters is available, whereas the improved LMMSEE also exploits
prior information regarding the estimation error.

Specifically, the following main results have been found:

\begin{enumerate}[i)]

\item
Under certain operative conditions, the MLE and MMSEE can be capable of
perfectly rejecting the pilot  spoofing signal in the high-SNR regime.

\item
Both the MLE and MMSEE can be entirely implemented from data, but
a training session spanning
multiple channel coherence intervals is required in a setup phase.

\item
The estimation error of the spoofing AoAs mainly affects
the performance of the naive LMMSEE at high SNR.

\item
The improved LMMSEE largely outperforms the LSE and
the naive LMMSEE even for low SSR values.

\end{enumerate}

In summary, this study demonstrates that,
if more sophisticated channel estimators
than the LSE are employed, an uplink
pilot spoofing attack might be
effectively counteracted at the base station,
resulting in a very limited signal leakage to Eve
in the downlink data phase.
Finally,  we assumed that a  ULA  is used
at  the  BS and focused on single-antenna legitimate users.
To improve physical-layer security, a viable strategy
is to exploit additional spatial dimensions. In this
respect, a first interesting
research subject consists of considering
three-dimensional (or  full-dimensional)  MIMO
at the BS and user terminals equipped with
multiple antennas.
Moreover, the effects of uplink channel estimation errors
on the downlink secrecy rate were studied through
numerical simulations. An additional research issue is
to develop a theoretical analy\-sis of the downlink SINR
that explicitly accounts for both correlation matrix and
channel estimation effects.
When the legitimate users employ non-orthogonal
pilot sequences,  it will be also interesting to study the joint effects
of pilot spoofing attacks and pilot contamination
on uplink channel estimation and downlink
secrecy rate.

%%%%%%%%%%%%%%%%%%Appendices%%%%%%%%%%%%%%%%%%%

\appendices

\section{Proof of Lemma~\ref{lem:1}}
\label{app:lem-1}

First, we observe that $\bA_\text{B} \left(\bA_\text{B}^\herm \, \bA_\text{B}\right)^{-2}
\bA_\text{B}^\herm$ and $\bA_\text{E} \, \bA_\text{E}^\herm$
are positive semi\-definite Hermitian matrices. For positive semi\-definite Hermitian matrices
$\bA \in \C^{n \times n}$ and
$\bB \in \C^{n \times n}$, with eigenvalues
sorted decreasingly
$a_1 \ge a_2 \ge \cdots \ge a_n$ and
$b_1 \ge b_2 \ge \cdots \ge b_n$, respectively,
it results \cite{Marshall} that
\be
\sum_{i=1}^n a_i \, b_{n-i+1} \le \trace(\bA \, \bB) \le \sum_{i=1}^n a_i \, b_i \: .
\ee
Therefore, the bound \eqref{eq:AMSE-LSE-asympt-bound} comes from
recalling the facts \cite{Horn.book.1990} that:
(i) the two matrices $\bA_\text{B} \left(\bA_\text{B}^\herm \, \bA_\text{B}\right)^{-2} \bA_\text{B}^\herm$ and $\left(\bA_\text{B}^\herm \, \bA_\text{B}\right)^{-1}$ have
the same nonzero eigenvalues;
(ii) the singular values $\sigma_{\ell}(\bA_\text{B})$ and
$\sigma_{\ell}(\bA_\text{E})$ are the nonnegative square roots
of the eigenvalues of $\bA_\text{B} \, \bA_\text{B}^\herm$
and $\bA_\text{E} \, \bA_\text{E}^\herm$, respectively.

\section{Proof of Lemma~\ref{lem:2}}
\label{app:lem-2}

By virtue of the matrix inversion lemma \cite{Horn.book.1991}, one has
\barr
\bR_{\bd\bd}^{-1} & =
\left(\bK_\text{E} \, \bK_\text{E}^\herm + \sigma_v^2 \, \bI_{\Nr}\right)^{-1}
\nonumber \\ & =
\frac{1}{\sigma_v^2} \left[ \bI_{\Nr}- \bK_\text{E} \left( \bK_\text{E}^\herm \, \bK_\text{E} + \sigma_v^2 \, \bI_{L_\text{E}}\right)^{-1} \bK_\text{E}^\herm \right] \: .
\label{eq:app2-1}
\earr
By substituting \eqref{eq:app2-1} in \eqref{eq:AMSE-ML}
and resorting again to the matrix inversion lemma, one gets
\begin{multline}
\text{BMSE}_\text{ML} = \sigma_v^2 \, \trace\left[ \left(\bK_\text{B}^\herm  \, \bK_\text{B}\right)^{-1} \right]
\\ + \sigma_v^2 \, \trace\left[ \bK_\text{B}^\dag \, \bK_\text{E}
\left(\bK_\text{E}^\herm \, \bP_\text{B} \, \bK_\text{E} + \sigma_v^2 \, \bI_{L_\text{E}} \right)^{-1} \bK_\text{E}^\herm \left(\bK_\text{B}^\herm\right)^\dag\right]
\label{eq:app2-2}
\end{multline}
where $\bP_\text{B} \eqdef \bI_{\Nr} - \bK_\text{B} \left(\bK_\text{B}^\herm  \, \bK_\text{B}\right)^{-1} \bK_\text{B}^\herm \in \C^{\Nr \times \Nr}$ is the orthogonal projector
onto $\nullo(\bK_\text{B}^\herm)$.
We recall that $\mathcal{N}(\bK_\text{B}^\herm)$ is
the orthogonal complement of $\range(\bK_\text{B})$ in $\C^{\Nr}$.

It is apparent from \eqref{eq:app2-2} that, if $\bK_\text{E}^\herm \, \bP_\text{B} \, \bK_\text{E}=\bO_{L_\text{E} \times L_\text{E}}$, i.e.,
$\range(\bK_\text{E}) \subseteq \range(\bK_\text{B})$, then
\be
\overline{\text{BMSE}}_\text{ML} =
\trace\left[ \bK_\text{B}^\dag \, \bK_\text{E} \,
\bK_\text{E}^\herm \left(\bK_\text{B}^\herm\right)^\dag\right] \neq 0
\ee
that is, in the absence of noise, the BMSE of the MLE
exhibits a saturation effect due to the concurrent spoofing transmission.
On the other hand, if
$\range(\bK_\text{E}) \subseteq \Orange(\bK_\text{B})$
or, equivalently,
$\range(\bK_\text{B}) \cap \range(\bK_\text{E})=\{\mathbf{0}_{K \Nr}\}$,
one has $\bK_\text{B}^\dag \, \bK_\text{E}=\bO_{L_\text{B} \times L_\text{E}}$
and, hence, $\overline{\text{AMSE}}_\text{ML}=0$: in this case,
in the absence of noise, perfect spoofing suppression is achieved.
As a final step, we observe tha condition
$\range(\bK_\text{B}) \cap \range(\bK_\text{E})=\{\mathbf{0}_{\Nr}\}$
is equivalent to
$\range(\bA_\text{B}) \cap \range(\bA_\text{E})=\{\mathbf{0}_{\Nr}\}$.

\section{Case 1: Evaluation of the Bayesian CRLB}
\label{app:FIM-1}

Let $p(\by,\bhb)$ be the \textit{joint} pdf
of $\by$ and $\bhb$,
under regularity conditions that are satisfied by Gaussian
random vectors \cite{VanTrees,Dong.2002}, the {\em Bayesian information matrix (BIM)}
is defined as
\be
\Fcal \eqdef \Es_{\by,\bhb}\left\{
\frac{\partial \ln p(\by,\bhb)}{\partial \bhb^*}
\left[\frac{\partial \ln p(\by,\bhb)}{\partial \bhb^*}\right]^\herm
\right\} \in \C^{L_\text{B} \times L_\text{B}} \: .
\label{eq:FIM-1}
\ee
The Bayesian CRLB is given by $\text{BCRLB}(\bhb)=\trace(\Fcal^{-1})$.

By resorting to the conditional expectation rule \cite{Casella}, the BIM
\eqref{eq:FIM-1} can be equivalently written as
\be
\Fcal = \Es_{\bhb} \left\{ \Es_{\by|\bhb}\left\{
\frac{\partial \ln p(\by,\bhb)}{\partial \bhb^*}
\left[\frac{\partial \ln p(\by,\bhb)}{\partial \bhb^*}\right]^\herm
\, \Big | \, \bhb \right\} \right\}
\label{eq:app1_1}
\ee
Since $p(\by,\bhb) = p(\by \, | \, \bhb) \, p(\bhb)$, the
second expectation in \eqref{eq:app1_1} becomes
\begin{multline}
 \Es_{\by|\bhb}\left\{
\frac{\partial \ln p(\by,\bhb)}{\partial \bhb^*}
\left[\frac{\partial \ln p(\by,\bhb)}{\partial \bhb^*}\right]^\herm
\, \Big | \, \bhb \right\}
\\ = \Fcal_{\text{c}} +
\frac{\partial \ln p(\bhb)}{\partial \bhb^*} \left[\frac{\partial \ln p(\bhb)}{\partial \bhb^*}\right]^\herm
\label{eq:app1_2}
\end{multline}
with
\be
\Fcal_{\text{c}} \eqdef
\Es_{\by|\bhb} \left\{
\frac{\partial \ln p(\by \, | \, \bhb)}{\partial \bhb^*}
\left[\frac{\partial \ln p(\by \, | \, \bhb)}{\partial \bhb^*}\right]^\herm
\, \Big | \, \bhb \right\} \: .
\label{eq:app1_3}
\ee
Remembering that
$\bhb \sim
{\cal CN}(\mathbf{0}_{L_\text{B}},\bI_{L_\text{B}})$ by assumption, it results that
 ${\partial \ln p(\bhb)}/{\partial \bhb^*}=\bhb$, thus
 yielding
\be
\Fcal=\Es_{\bhb}[\Fcal_{\text{c}}] + \bI_{L_\text{B}} \: .
\label{eq:app1_4}
\ee
On the other hand, since $\bd \sim
{\cal CN}(\mathbf{0}_{\Nr}, \bR_{\bd\bd})$, one gets
\be
\frac{\partial \ln p(\by \, | \, \bhb)}{\partial \bhb^*} =
- \bK_\text{B}^\herm \, \bR_{\bd\bd}^{-1} \, \by +
\bK_\text{B}^\herm \, \bR_{\bd\bd}^{-1} \,  \bK_\text{B} \, \bhb \: .
\label{eq:app1_6}
\ee
Accounting for \eqref{eq:app1_6}, the matrix $\Fcal_{\text{c}}$ defined in
\eqref{eq:app1_3} can be expressed as
$\Fcal_{\text{c}} = \bK_\text{B}^\herm \, \bR_{\bd\bd}^{-1} \,  \bK_\text{B}$,
which does not depend on $\bhb$. Therefore, owing to \eqref{eq:app1_4},
it results that $\text{BCRLB}(\bhb)$ exactly coincides with
\eqref{eq:BMMSE-1}.

\section{Proof of Lemma~\ref{lem:3}}
\label{app:lem-3}

By virtue of the matrix inversion lemma \cite{Horn.book.1991}, one has
\begin{multline}
\left(\bI_{L_\text{B}}+\bK_\text{B}^\herm \, \bR_{\bd\bd}^{-1} \,  \bK_\text{B}\right)^{-1}  = \bI_{L_\text{B}} -
\bK_\text{B}^\herm \left(\bK_\text{B} \, \bK_\text{B}^\herm +
\bR_{\bd\bd} \right) \bK_\text{B}
\\  =
\bI_{L_\text{B}} -
\bJ^\trasp \bK \left(\bK \, \bK^\herm +
\sigma_v^2 \, \bI_{\Nr}\right) \bK \, \bJ
\label{eq:app7}
\end{multline}
where we have also used the expression of $\bR_{\bd\bd}$ and defined
$\bJ \eqdef [\bI_{L_\text{B}}, \bO_{L_\text{B} \times L_\text{E}}]^\trasp$
and $\bK \eqdef [\bK_\text{B}, \bK_\text{E}]
\in \C^{\Nr \times (L_\text{B}+L_\text{E})}$.
By substituting \eqref{eq:app7} in \eqref{eq:BMMSE-1} and
using the limit formula for the Moore–Penrose inverse \cite{Ben.book.2002}, one has
\be
\overline{\text{BMSE}}_\text{MMSE}
= L_\text{B} - \trace\left[\bJ^\trasp \left( \bK^\dag  \bK\right)^\herm \bJ\right]
\ee
from which follows that $\overline{\text{BMSE}}_\text{MMSE}=0$ if $\bK$ has full-column rank,
i.e., $\bK^\dag  \bK=\bI_{L_\text{B}}$. The proof is completed by observing that
$\bK$ is full column rank if and only if the columns of
$\bA_\text{B}$ and $\bA_\text{E}$ are
linearly independent.

\section{Truncated distributions}
\label{app:trunc}

A real-valued random variable
$X$ is said to follow a \textit{truncated Gaussian distribution} over the
interval $[a, b]$ (with $b>a$) -- denoted as
$X \sim {\cal N}_\text{T}(\mu, \sigma, a,b)$ --
if its pdf  is given by (see, e.g., \cite{Johnson})
\be
p_X(x)=
\begin{cases}
\frac{1}{C \sqrt{2 \pi} \sigma}
\, e^{- \frac{(x-\mu)^2}{2 \sigma^2}} \: ,
& \text{for $a \le x \le  b$} \: , \\
0 \: ,& \text{otherwise} \: ,
\end{cases}
\ee
where $\mu$ and $\sigma$ are the mean and
standard deviation of the corresponding
untruncated Gaussian distribution, respectively, and
$C \eqdef \left\{\text{erf}\left[(b-\mu)/{(\sqrt{2} \, \sigma)}\right]
- \text{erf}\left[(a-\mu)/{(\sqrt{2} \, \sigma)}\right]\right\}/2$
ensures that $p_X(x)$ is a valid pdf.
It is important to emphasize that $\mu$ and $\sigma$ are shape parameters for
the truncated distribution - they are not the mean and the standard deviation
of $X$. In the special case of $a=-b<0$ and $\mu=0$, it follows \cite{Johnson} that
$\Es[X]=0$ and
$\Es[X^2]= \sigma^2 \left[ 1- 2 \, b \, p_X(b) \right]$.

Let $X \sim {\cal N}_\text{T}(\mu, \sigma, a,b)$, the random variable
$Y=e^X$ exhibits a \textit{truncated lognormal distribution}, i.e., its pdf
can be written as
\be
p_Y(y)=
\begin{cases}
\frac{1}{C \sqrt{2 \pi} \sigma y}
\, e^{- \frac{[\ln(y)-\mu]^2}{2 \sigma^2}} \: ,
& \text{for $e^a \le y \le  e^b$} \: , \\
0 \: ,& \text{otherwise} \: .
\end{cases}
\ee
In the special case of $a=-b<0$ and $\mu=0$, one has
(details are omitted for the sake of brevity)
\begin{multline}
\Es[Y]=\Es[e^X]=\Es[e^{-X}]
\\ =\frac{e^{\frac{\sigma^2}{2}}}{2 \, \text{erf}\left(\frac{b}{\sqrt{2} \, \sigma}\right)} \left[ \text{erf}\left(\frac{b-\sigma^2}{\sqrt{2} \, \sigma}\right)
+ \text{erf}\left(\frac{b+\sigma^2}{\sqrt{2} \, \sigma}\right) \right] \: .
\label{app:boh}
\end{multline}
It is noteworthy that $\Es[Y] =1$, as $\sigma \to 0$.

\section{Case 2: Derivation of the improved LMMSEE}
\label{app:lmmsee}

The general expression of the LMMSEE \cite{Kay} is given by
\be
\bhbhatlmmsedue = \Es \left[ \bhb \, \by^\herm \right] \, \left(\Es \left[ \by \, \by^\herm \right]\right)^{-1} \by
\label{eq:app8}
\ee
where the expectation is taken not only over
$(\bhb, \bh_\text{E},\bv)$, but over
the probability densities of
$\Delta\theta_{\text{E},1}, \Delta\theta_{\text{E},2}, \ldots, \Delta\theta_{\text{E},L_\text{E}}$, and $\Delta\Ps$ as well.
It is readily seen that $\Es \left[ \bhb \, \by^\herm \right]=\bK_\text{B}^\herm $.

By resorting to the conditional expectation rule \cite{Casella}, the
correlation matrix in \eqref{eq:app8} can be equivalently written as
\barr
\Es \left[ \by \, \by^\herm \right] &= \Es_{
\{\Delta\theta_{\text{E},\ell}\}_{\ell=1}^{L_\text{E}},
\Delta\Ps}
\left[ \bR_{\by\by} \right]
\nonumber \\ & = \bK_\text{B} \, \bK_\text{B}^\herm +
\Es_{
\{\Delta\theta_{\text{E},\ell}\}_{\ell=1}^{L_\text{E}},
\Delta\Ps}
\left[ \bK_\text{E} \, \bK_\text{E}^\herm \right] + \sigma_v^2 \, \bI_{\Nr}
\label{eq:app9}
\earr
where the (conditional) correlation matrix $\bR_{\by\by}$, given $\bK_\text{E}$,
has been defined in \eqref{eq:Ryy}.
According to \eqref{eq:dirhat} and \eqref{eq:powhat},
we remember that $\bK_\text{E} =
\sqrt{\Ps} \bA_\text{E}$, with
$\Ps  = \Pshat \, e^{-\Delta \Ps}$ and
\begin{multline}
\bA_\text{E}  =\frac{1}{\sqrt{L_\text{E}}}
\left[ \ba(\widehat{\theta}_{\text{E},1}-\Delta \theta_{\text{E},1}),
\ba(\widehat{\theta}_{\text{E},2}-\Delta \theta_{\text{E},2}),
\right. \\ \left.
\ldots,
\ba(\widehat{\theta}_{\text{E},L_\text{E}}-\Delta \theta_{\text{E},L_\text{E}})\right] \: .
\label{eq:app10}
\end{multline}
Using the properties of the Kronecker product, one gets
\begin{multline}
\Es_{
\{\Delta\theta_{\text{E},\ell}\}_{\ell=1}^{L_\text{E}},
\Delta\Ps}
\left[ \bK_\text{E} \, \bK_\text{E}^\herm \right]
\\ = \Pshat \, \Es\left[e^{-\Delta \Ps}\right] 
\Es_{
\{\Delta\theta_{\text{E},\ell}\}_{\ell=1}^{L_\text{E}}} \left[ \bA_\text{E}  \,
\bA_\text{E}^\herm \right]
\\ = \Pshat \, \Es\left[e^{-\Delta \Ps}\right] 
\frac{1}{L_\text{E}} \sum_{\ell=1}^{L_\text{E}}
\bR_{\ba\ba}^{(\ell)}
\label{eq:app11}
\end{multline}
where we have defined
\be
\bR_{\ba\ba}^{(\ell)} \eqdef \Es_{
\Delta\theta_{\text{E},\ell}} \left[\ba(\widehat{\theta}_{\text{E},\ell}-\Delta \theta_{\text{E},\ell}) \, \ba^\herm(\widehat{\theta}_{\text{E},\ell}-\Delta \theta_{\text{E},\ell})\right] \: .
\label{eq:app12}
\ee
By virtue of \eqref{app:boh}, it is readily seen that
$\Es\left[e^{-\Delta \Ps}\right]$ is given by
\eqref{eq:meandeltaPE}.
The $(n_1+1,n_2+1)$th entry of $\bR_{\ba\ba}^{(\ell)}$ is given by
\begin{multline}
\left\{\bR_{\ba\ba}^{(\ell)}\right \}_{n_1+1,n_2+1}= \frac{\Es_{
\Delta\theta_{\text{E},\ell}} \left[ e^{-j 2 \pi (n_1-n_2) \Delta \cos(
\widehat{\theta}_{\text{E},\ell}-\Delta \theta_{\text{E},\ell})}\right]
}{N} \\ =
\frac{\Es_{\Delta\theta_{\text{E},\ell}}
\left\{\cos\left[2 \pi (n_1-n_2) \Delta
\cos(\widehat{\theta}_{\text{E},\ell}-\Delta \theta_{\text{E},\ell})\right]\right\}}{N}
\\
\frac{-j \, \Es_{\Delta\theta_{\text{E},\ell}}
\left\{\sin\left[2 \pi (n_1-n_2) \Delta
\cos(\widehat{\theta}_{\text{E},\ell}-\Delta \theta_{\text{E},\ell})\right]\right\}}{N}
\label{eq:app13}
\end{multline}
for $n_1,n_2 \in \{0,1,\ldots, N-1\}$.  By resorting to the
difference formula for cosine, one gets
\begin{multline}
\cos(
\widehat{\theta}_{\text{E},\ell}-\Delta \theta_{\text{E},\ell})=
\cos(
\widehat{\theta}_{\text{E},\ell}) \, \cos(\Delta \theta_{\text{E},\ell})
\\ +
\sin(
\widehat{\theta}_{\text{E},\ell}) \, \sin(\Delta \theta_{\text{E},\ell})
\\ \approx \cos(
\widehat{\theta}_{\text{E},\ell}) \, \left[1- \frac{\left(\Delta \theta_{\text{E},\ell}\right)^2}{2}\right]
+ \sin(
\widehat{\theta}_{\text{E},\ell}) \,  \Delta \theta_{\text{E},\ell}
\label{eq:app14}
\end{multline}
where we have also approximated $\cos(\Delta \theta_{\text{E},\ell})$
and $\sin(\Delta \theta_{\text{E},\ell})$ by using their corresponding
second-order Maclaurin series expansion, under the assumption that
the estimation error is sufficiently small.
By substituting \eqref{eq:app14} in \eqref{eq:app13}, employing
the difference formulas for cosine and sine, taking
second-order Maclaurin series expansion
of $\cos[\beta_\text{I}(n_1,n_2,\ell) \,
{\left(\Delta \theta_{\text{E},\ell}\right)^2}/{2}]$,
$\cos[\beta_\text{Q}(n_1,n_2, \ell) \,
\Delta \theta_{\text{E},\ell}]$,
$\sin[\beta_\text{I}(n_1,n_2,\ell) \,
{\left(\Delta \theta_{\text{E},\ell}\right)^2}/{2}]$,
$\sin[\beta_\text{Q}(n_1,n_2,\ell) \,
\Delta \theta_{\text{E},\ell}]$, and neglecting
all the terms that tend to zero faster than
$\left(\Delta \theta_{\text{E},\ell}\right)^2$, one has the approximations
\begin{multline}
\cos\left[2 \pi (n_1-n_2) \Delta
\cos(\widehat{\theta}_{\text{E},\ell}-\Delta \theta_{\text{E},\ell})\right]
\approx
\cos\left[\beta_\text{I}(n_1,n_2,\ell)\right]
\\ \cdot
\left[ 1- \beta_\text{Q}^2(n_1,n_2,\ell) \,
\frac{\left(\Delta \theta_{\text{E},\ell}\right)^2}{2}
\right] +
\sin\left[\beta_\text{I}(n_1,n_2,\ell)\right]
\\ \cdot
\left[\beta_\text{I}(n_1,n_2,\ell) \,
\frac{\left(\Delta \theta_{\text{E},\ell}\right)^2}{2}
- \beta_\text{Q}(n_1,n_2,\ell) \, \Delta \theta_{\text{E},\ell}
\right]
\label{eq:app15}
\end{multline}
\begin{multline}
\sin\left[2 \pi (n_1-n_2) \Delta
\cos(\widehat{\theta}_{\text{E},\ell}-\Delta \theta_{\text{E},\ell})\right]
\approx
\sin\left[\beta_\text{I}(n_1,n_2,\ell)\right]
\\ \cdot
\left[ 1- \beta_\text{Q}^2(n_1,n_2,\ell) \,
\frac{\left(\Delta \theta_{\text{E},\ell}\right)^2}{2}
\right] -
\cos\left[\beta_\text{I}(n_1,n_2,\ell)\right]
\\ \cdot
\left[\beta_\text{I}(n_1,n_2,\ell) \,
\frac{\left(\Delta \theta_{\text{E},\ell}\right)^2}{2}
- \beta_\text{Q}(n_1,n_2,\ell) \, \Delta \theta_{\text{E},\ell}
\right]
\label{eq:app16}
\end{multline}
where
$\beta_\text{I}(n_1,n_2,\ell) \eqdef 2 \pi (n_1-n_2) \, \Delta \,
\cos(\widehat{\theta}_{\text{E},\ell})$ and
$\beta_\text{Q}(n_1,n_2,\ell) \eqdef 2 \pi (n_1-n_2) \,\Delta \,
\sin(\widehat{\theta}_{\text{E},\ell})$.
Eq.~\eqref{eq:LMMSE} comes from taking the expectation of
\eqref{eq:app15} and \eqref{eq:app16} over the pdf of
$\Delta \theta_{\text{E},\ell}$ (see Appendix~\ref{app:trunc}) and substituting the results
in \eqref{eq:app13}.

%=======================Bibliography==============================%

%%%%%%%%%%%%%%%%%%%%%%%%%%%%%%%%%%%%%%%%%%%%

\begin{IEEEbiography}
[{\includegraphics[width=1in,height=1.25in,clip,keepaspectratio]{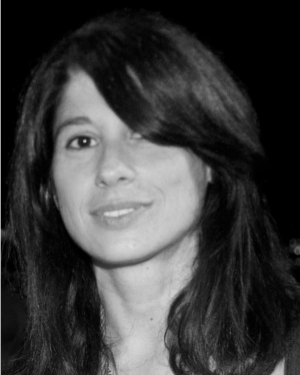}}]
{Donatella Darsena} (M'06-SM'16) received the Dr. Eng. degree summa cum laude in telecommunications engineering in 2001, and the Ph.D. degree in electronic and telecommunications engineering in 2005, both from the University of Napoli Federico II, Italy. From 2001 to 2002, she was an engineer in the Telecommunications, Peripherals and Automotive Group, STMicroelectronics, Milano, Italy. Since 2005, she has been an Assistant Professor with the Department of Engineering, University of Napoli Parthenope, Italy. Her research activities lie in the area of statistical signal processing, digital communications, and communication systems. In particular, her current interests are focused on equalization, channel identification, narrowband-interference suppression for multicarrier systems, space-time processing for cooperative communications systems and cognitive communications systems, and software-defined networks. Dr. Darsena has served as an Associate Editor for the IEEE COMMUNICATIONS LETTERS since December 2016, Senior Area Editor of the IEEE COMMUNICATIONS LETTERS since 2019 and Associate Editor for IEEE Access since October 2018
\end{IEEEbiography}

\vspace*{-2\baselineskip}

\begin{IEEEbiography}
[{\includegraphics[width=1in,height=1.25in,clip,keepaspectratio]{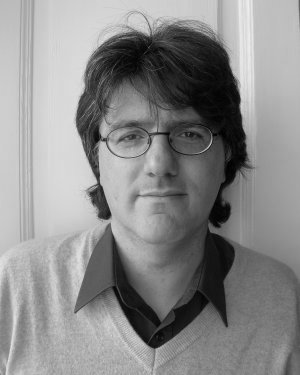}}]
{Giacinto Gelli}(M'18, SM'20)
was born in Napoli, Italy, on July 29, 1964.
He received the Dr. Eng. degree \textit{summa cum laude} in electronic
engineering in 1990, and the Ph.D. degree in computer science and
electronic engineering in 1994, both from the University of Napoli
Federico II.

From 1994 to 1998, he was an Assistant Professor with the
Department of Information Engineering, Second University of
Napoli.
Since 1998 he has been with the Department of Electrical Engineering and Information Technology, University of Napoli Federico II,
first as an Associate Professor,
and since November 2006 as a Full Professor of Telecommunications.
He also held teaching positions at the University Parthenope of
Napoli.
His research interests are in the broad area of
signal and array processing for communications,
with current emphasis on backscatter communications,
multicarrier modulation systems and
space-time techniques for cooperative and cognitive
communications systems.
\end{IEEEbiography}

\vspace*{-2\baselineskip}

\begin{IEEEbiography}
[{\includegraphics[width=1in,height=1.25in,clip,keepaspectratio]{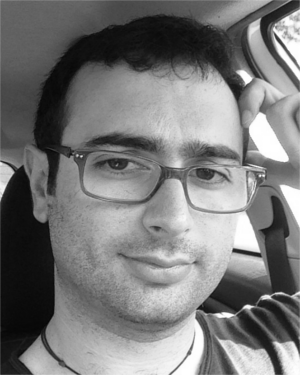}}]
{Ivan Iudice}
was born in Livorno, Italy, on November 23, 1986.
He received the B.S. and M.S. degrees
in telecommunications engineering in 2008 and 2010,
respectively, and the Ph.D. degree
in information technology and electrical engineering in 2017,
all from University of Napoli Federico II, Italy.

Since 2011, he has been part of
the Electronics and Communications laboratory
at Italian Aerospace Research Centre (CIRA),
Capua, Italy.
His research activities lie in the area of
signal and array processing for communications,
with current interests are focused
on physical layer cyber security
and space-time techniques for
cooperative communications systems.
\end{IEEEbiography}

\vspace*{-2\baselineskip}

\begin{IEEEbiography}[
{\includegraphics[width=1in,height=1.25in,clip,keepaspectratio]{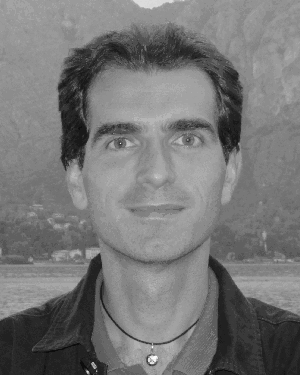}}]
{Francesco Verde}(M'10-SM'14) was born in Santa Maria Capua Vetere,
Italy, on June 12, 1974. He received the Dr. Eng. degree
\textit{summa cum laude} in electronic engineering
from the Second University of Napoli, Italy, in 1998, and the Ph.D.
degree in information engineering
from the University of Napoli Federico II, in 2002.
Since December 2002, he has been with the University of Napoli Federico II. He first served as an Assistant Professor of signal theory and mobile communications
and, since December 2011, he has served as an Associate Professor of telecommunications with the Department of Electrical Engineering and Information Technology.
His research activities include orthogonal/non-orthogonal multiple-access techniques, space-time processing for cooperative/cognitive communications, wireless systems optimization, and software-defined networks.

Prof. Verde has been involved in several technical program committees of major IEEE conferences in signal processing and wireless communications.
He has served as Associate Editor for IEEE TRANSACTIONS ON COMMUNICATIONS since 2017 and Senior Area Editor of the IEEE SIGNAL PROCESSING LETTERS since 2018. He was an Associate Editor of the IEEE TRANSACTIONS ON SIGNAL PROCESSING (from 2010 to 2014) and  IEEE SIGNAL PROCESSING LETTERS (from 2014 to 2018), as well as Guest Editor of the EURASIP Journal on Advances in Signal Processing in 2010 and SENSORS MDPI in 2018. 
\end{IEEEbiography}

\end{document}